\documentclass[a4paper,11pt]{article}
\pdfoutput=1
\usepackage[T1]{fontenc}
\usepackage{lmodern}
\usepackage[margin=1.05in,includefoot]{geometry}
\usepackage{amsmath,amssymb,amsthm,mathtools}
\usepackage[colorlinks=true,linkcolor=blue,citecolor=blue,urlcolor=blue]{hyperref}
\usepackage[protrusion=false]{microtype}
\usepackage{pgfplots}
\usepackage{accsupp}
\pgfplotsset{compat=1.18}
\hypersetup{pdftitle={Exact value and rigidity of the 2x3 magic rectangle},
pdfauthor={Isaac Barouch Essayag; Aryeh Lev Zabokritskiy (Yohananov)},pdflang={en-US}}
\newtheorem{theorem}{Theorem}[section]
\newtheorem{lemma}[theorem]{Lemma}
\newtheorem{proposition}[theorem]{Proposition}
\newtheorem{corollary}[theorem]{Corollary}
\theoremstyle{remark}
\newcommand{\Id}{I}
\newcommand{\Tr}{\operatorname{Tr}}
\newcommand{\supp}{\operatorname{supp}}
\newcommand{\pg}{p_{\mathrm{guess}}}
\newcommand{\ket}[1]{\lvert #1\rangle}
\newcommand{\bra}[1]{\langle #1\rvert}
\newcommand{\ac}[2]{\{#1,#2\}}
\title{Exact value and rigidity of the $2\times3$ magic rectangle}
\newcommand{\XXThreeAuthors}{\begin{minipage}{0.88\textwidth}
\centering
\normalsize Isaac Barouch Essayag\par
\vspace{6pt}
\small MIGAL--Galilee Research Institute, Kiryat Shmona, Israel;\par
and Tel-Hai University of Kiryat Shmona in the Galilee,\par
Kiryat Shmona, Israel\par
\vspace{4pt}
{\hypersetup{hidelinks}\href{mailto:isaac.es@migal.org.il}{\texttt{isaac.es@migal.org.il}}}\par
\vspace{18pt}
\normalsize Aryeh Lev Zabokritskiy (Yohananov)\par
\vspace{6pt}
\small Department of Computer Science, MIGAL--Galilee Research Institute,\par
Kiryat Shmona, Israel; and Department of Computer Science,\par
Tel-Hai University of Kiryat Shmona in the Galilee, Kiryat Shmona, Israel\par
\vspace{4pt}
{\hypersetup{hidelinks}\href{mailto:yuhanalev@telhai.ac.il}{\texttt{yuhanalev@telhai.ac.il}}}\par
\vspace{4pt}
ORCID: {\hypersetup{hidelinks}\href{https://orcid.org/0000-0003-3151-6192}{0000-0003-3151-6192}}
\end{minipage}}
\author{\XXThreeAuthors}
\date{}
\begin{document}
\maketitle
\begin{abstract}
\leavevmode\pdfbookmark[1]{Abstract}{abstract}%
The $2\times3$ magic rectangle is a nonlocal game in which two players
match entries subject to incompatible parity constraints. Its quantum value
was suggested by numerical bounds and known quantum random access codes.
We prove that this value is $(1+\sqrt{2/3})/2$ and classify all optimal
strategies in arbitrary finite local dimensions, allowing general measurements:
up to local isometries and ancillary systems, each contains a maximally
entangled pair of four-dimensional systems with fixed measurements on the
local state supports. The same value holds for commuting-operator strategies
and for the almost-quantum relaxation, confirming the Adamson--Wallden
conjecture for this rectangle. For finite-dimensional magic-square strategies,
the largest score compatible with perfect prediction of Alice's complete
first row by an adversary with quantum side information is one third plus
two thirds of the rectangle value. We classify the strategies attaining
this threshold with perfect prediction and give linear-order upper and lower
bounds on one minus the best compatible guessing probability above it.
The rectangle value is also the optimal average success probability of
finite-dimensional three-bit random access codes that hide the full input
parity, and we classify all optimal codes.
\end{abstract}

\section{Introduction}
Nonlocal games test quantum correlations and can identify the state and
measurements responsible for an optimal score. Magic rectangles form a
concrete family extending the magic square of Mermin and Peres
\cite{Mermin,Peres}: one player supplies a row and the other a column,
subject to incompatible parity constraints. Adamson and Wallden
characterized many rectangle values and connected them to randomness
expansion \cite{AW}. Three-row rectangles supplemented with check rounds
also self-test Bell states in parallel \cite{AWselftest}.

For the $2\times3$ rectangle, the quantum value remained undetermined.
Its classical value is $5/6$. Adamson and Wallden obtained the lower bound
$(4+\sqrt2)/6\approx0.9024$ and a numerical upper bound with conjectured
closed form $(1+\sqrt{2/3})/2\approx0.9082$. The upper bound comes from the
almost-quantum semidefinite relaxation \cite{almostquantum}; their conjecture
covers all two-row rectangles \cite[Theorem~13, Table~II, Conjecture~17]{AW}.

Measurements attaining the proposed value are available in the quantum
random access codes of Imamichi and Raymond \cite{IR} and Suzuki
\cite{Suzuki}. Through remote preparation, these constructions supply the
attaining strategy used here (Section~\ref{sec:rectangle}). Man\v{c}inska and
Storgaard proved the same optimal score for
three-bit codes on two qubits \cite{MS}; general bounds for prescribed
encoding dimensions are developed in \cite{FMT}. These coding results fix
the encoding dimension. The rectangle problem allows arbitrary local
dimensions and asks both for the optimal score and for the structure of
all strategies attaining it.

We prove that the quantum value is exactly $(1+\sqrt{2/3})/2$ and classify
all optimal strategies in arbitrary finite local dimensions, allowing
general local measurements. Every optimal strategy contains a maximally
entangled pair of four-dimensional systems, decoupled from any system held
by an adversary. The classification also identifies the measurements on the
local state supports, up to local isometries and ancillary systems. Thus, at
the optimum, the rectangle self-tests both the entangled state and the
measurements. The same value holds for commuting-operator strategies and
for the almost-quantum relaxation \cite{almostquantum}. This confirms
the conjectured expression for the $2\times3$ case.

The rectangle value also answers a randomness question for the magic
square: what is the largest score compatible with perfect prediction of
Alice's complete first row by an adversary with quantum side information?
This question was left open in \cite[Appendix C]{CRS}.
For finite-dimensional strategies in a single round, we obtain the exact
threshold $(2+\sqrt{2/3})/3\approx0.939$: a perfectly predicted row can be
won with certainty, and the other two rows then form the rectangle.
Adamson and Wallden's score $(7+\sqrt2)/9\approx0.935$ concerns a different
condition: both devices are deterministic at one distinguished input pair
\cite[Theorem~20]{AW}. Our condition requires perfect prediction of Alice's
row. We also classify all finite-dimensional strategies attaining our
threshold with perfect prediction. Every such strategy still contains a
maximally entangled pair of four-dimensional systems, even when the
adversary knows the entire row.

Above the threshold, we bound one minus the best compatible guessing
probability, and the smallest compatible conditional min-entropy and
conditional von Neumann entropy, above and below by constant multiples of
the score excess. The bounds cover scores up to one. The von Neumann entropy
lower bound also applies to the parity-extended magic-square games of
\cite{CRS}, for any number of parties and finite-dimensional local systems,
by the argument of \cite[Lemma~6.6]{CRS}.
It improves the numerical bound of \cite[Lemma~6.7]{CRS} between our
threshold and a score of about $0.943$; the earlier bound is stronger
above that value.

A second application concerns three-bit random access codes that hide
the full input parity, while two-bit parities may be revealed
\cite{CKKS,RP}. Their optimal success probability equals the rectangle
value, resolving the three-bit odd-parity-oblivious case left open in
\cite{CKKS}. This score exceeds the value stated to be optimal for
four-dimensional encodings in \cite[arXiv v1]{RP}. We classify all optimal
finite-dimensional encodings and their decoders on the occupied support.
For approximate parity hiding, we give an upper bound on success with a
correction linear in the trace distance between the two parity averages
and independent of the encoding dimension.

Our contributions are:
\begin{itemize}
\item We determine the rectangle value and classify all optimal strategies
in arbitrary finite local dimensions.
\item We determine the magic-square threshold for perfect prediction of
Alice's first row, classify the finite-dimensional strategies attaining it
with perfect prediction, and establish the linear order of one minus the
best compatible guessing probability above the threshold.
\item We determine the optimal success probability and classify the optimal
finite-dimensional codes hiding the full three-bit parity, and bound
success under approximate parity hiding.
\end{itemize}
The paper is organized as follows. Section~\ref{sec:rectangle} defines the
rectangle and the operator model, presents the four-dimensional attaining
strategy and states the main theorem. Section~\ref{sec:sos} proves the value
bound, including its commuting-operator and almost-quantum versions, and
Section~\ref{sec:rigidity} classifies the optimal strategies.
Section~\ref{sec:threshold} determines the magic-square threshold for perfect
prediction and classifies the strategies attaining it.
Section~\ref{sec:guessing} bounds the guessing probability and conditional
entropies above the threshold. Section~\ref{sec:rac} treats random access
codes that hide the full input parity, and Section~\ref{sec:conclusion} concludes.

\section{The rectangle and its optimal strategies}\label{sec:rectangle}
The rectangle retains the parity conflict of the magic-square game while
allowing only two row inputs. Its score will be expressed as the expectation
of one Bell operator. This form makes the upper bound and the equality
conditions accessible to the same algebraic argument.

Alice receives a row $x\in\{0,1\}$, the first or the second row, and outputs an even sign string of length
three. Bob receives a column $j\in\{1,2,3\}$ and outputs two opposite signs, one for each row.
The inputs are independent and uniform, and the players win when their
intersection signs agree. We write $\omega_Q(2,3)$ for the supremum of the
winning probability over tensor-product strategies of arbitrary finite
local dimensions, with arbitrary local positive-operator-valued measures
(POVMs). Purifications may include an arbitrary system $E$ held by Eve.
The commuting-operator model instead uses one Hilbert space with commuting
Alice and Bob measurement algebras; there, too, general measurements may be
taken projective \cite[Sec.~2.2]{NPA}.

Negate Alice's second output row, so that its sign product becomes $-1$,
and denote its coordinate observables by $C_1,C_2,C_3$. Denote the first-row
observables by $D_1,D_2,D_3$, and Bob's first sign in column $j$ by $B_j$.
For a finite family of POVMs, one fixed local isometric embedding
and input-dependent projective measurements realize all the original
measurements. Their outcome labels enforce the required parities.
This reduction preserves all correlations and uses only local ancillas.
After this local projective dilation, we consider arbitrary Hermitian
reflections $D_j$, $C_j$ and $B_j$ satisfying
\begin{equation}\label{eq:relations}
 D_j^2=C_j^2=B_j^2=\Id,\qquad
 D_1D_2D_3=\Id,\qquad C_1C_2C_3=-\Id.
\end{equation}
The $D$ operators commute with one another, as do the $C$ operators.
Every Alice operator commutes with every Bob operator.

With tensor factors suppressed, put
\begin{equation}\label{eq:notation}
r\coloneqq\sqrt6,\qquad q\coloneqq\frac{2r}{3}=\sqrt{\frac83},\qquad
X_j\coloneqq D_j+C_j,\qquad S\coloneqq\sum_{j=1}^3X_jB_j.
\end{equation}
The rectangle score is $1/2+\langle S\rangle/12$. The sign change in the
second row makes both rows contribute positively to $X_jB_j$: Bob's actual
second sign is $-B_j$. Thus an upper bound on $S$ bounds the original game.

We next define the reference strategy. Fix the even strings in the order
\begin{equation}\label{eq:alpha}
\alpha_1\coloneqq(1,1,1),\quad \alpha_2\coloneqq(1,-1,-1),\quad
\alpha_3\coloneqq(-1,1,-1),\quad \alpha_4\coloneqq(-1,-1,1),
\end{equation}
and set
\begin{equation}\label{eq:matrix}
M\coloneqq\begin{pmatrix}0&1&1&1\\-1&0&1&-1\\-1&-1&0&1\\-1&1&-1&0\end{pmatrix},
\qquad U_0\coloneqq M/\sqrt3.
\end{equation}
Here $MM^T=3\Id$; such a matrix with zero diagonal and signs elsewhere is
a conference matrix. With the zero diagonal, and since each coordinate takes each sign twice among the $\alpha_a$, $M$ gives the identities \eqref{eq:construction} below. A superscript $0$ marks reference objects. On $\mathbb C^4$ let
\begin{align}\label{eq:ideal}
D_1^0&\coloneqq\operatorname{diag}(1,1,-1,-1),&
D_2^0&\coloneqq\operatorname{diag}(1,-1,1,-1),&
D_3^0&\coloneqq\operatorname{diag}(1,-1,-1,1),\notag\\
C_j^0&\coloneqq-U_0D_j^0U_0^*,&
X_j^0&\coloneqq D_j^0+C_j^0,& B_j^0&\coloneqq(X_j^0)^T/q.
\end{align}
Alice's row measurements have rank-one projections
$P_a^0\coloneqq\ket a\bra a$ and
$Q_a^0\coloneqq U_0\ket a\bra aU_0^*$, respectively, both labeled by $\alpha_a$.
Her physical second-row observables are $-C_j^0$. The identities
\begin{equation}\label{eq:construction}
\ac{D_j^0}{C_j^0}=\frac23\Id,\qquad
(X_j^0)^2=\frac83\Id
\end{equation}
make Bob's observables reflections. On
$\ket{\Phi_4}\coloneqq\frac12\sum_{a=1}^4\ket{a,a}$,
the relation
$\langle A\otimes B\rangle_{\Phi_4}=\Tr(AB^T)/4$
shows that each intersection correlation is $\sqrt{2/3}$, so
$\langle S\rangle_{\Phi_4}=6\sqrt{2/3}=2r$.
By remote preparation, the reference strategy realizes, up to a unitary change of basis, the known random access code
construction using two orthonormal bases, one for each input parity
\cite{IR,Suzuki}; Section~\ref{sec:rac} makes this correspondence explicit, and Theorem~\ref{thm:RACclass} implies the unitary equivalence.
The following theorem proves that this strategy is optimal and describes
all finite-dimensional strategies attaining its score.

\begin{theorem}[Exact value and rigidity]\label{thm:rigidity}
The quantum and commuting-operator values of the uniform $2\times3$ magic
rectangle are
\begin{equation}\label{eq:value}
\omega_Q(2,3)=\frac{1+\sqrt{2/3}}2.
\end{equation}
Every optimal finite-dimensional tensor-product strategy, with arbitrary
POVMs and a purification $\psi_{ABE}$, admits isometries $V_A$ and $V_B$ of Alice's and
Bob's local spaces into $\mathbb C^4_{A_0}\otimes\mathcal H_{A'}$ and $\mathbb C^4_{B_0}\otimes\mathcal H_{B'}$ under which
\begin{equation}\label{eq:extract}
\psi_{ABE}\longmapsto
\ket{\Phi_4}_{A_0B_0}\otimes\ket\eta_{A'B'E}.
\end{equation}
The original measurement effects leave the local reduced-state supports
invariant, and $V_A$ and $V_B$ map these supports onto $\mathbb C^4\otimes\mathcal K_A$ and $\mathbb C^4\otimes\mathcal K_B$ for subspaces $\mathcal K_A\subseteq\mathcal H_{A'}$ and $\mathcal K_B\subseteq\mathcal H_{B'}$. There, the conjugated effects $V_A(\cdot)V_A^*$ of Alice's rows $0$ and $1$ for the output $\alpha_a$ are
$P_a^0\otimes\Id$ and $Q_a^0\otimes\Id$, and Bob's conjugated observable for his first sign in column $j$ is
$B_j^0\otimes\Id$. Conversely, every strategy that takes this form under some local
isometries is optimal. Each local reduced-state
support has dimension divisible by four; in particular, the smallest local
dimension of an optimal strategy is four.
\end{theorem}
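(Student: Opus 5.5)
The plan has two parts: a sum-of-squares (SOS) bound $S\le 2r$ that uses only the relations \eqref{eq:relations}, and then a rigidity argument that reads the equality conditions of that SOS. With $S\le 2r$, the score is at most $1/2+2r/12=(1+\sqrt{2/3})/2$, and the reference strategy on $\ket{\Phi_4}$ attains it, which gives \eqref{eq:value}.

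\textbf{Value bound.} First I bound Bob's terms. For any reflection $B_j$ and any Hermitian $X_j$ commuting with it,
\[
\frac{q}{2}\Id+\frac{1}{2q}X_j^2-X_jB_j=\frac{1}{2q}(X_j-qB_j)^2\ \ge 0 .
\]
Summing over $j$ gives
\[
S\le \tfrac{3q}{2}\Id+\tfrac{1}{2q}\sum_j X_j^2 .
\]
Next, $X_j^2=2\Id+\ac{D_j}{C_j}$, so what remains is an operator inequality on Alice's side alone:
\[
\sum_j\ac{D_j}{C_j}\le 2\Id .
\]
This is where the two parity constraints enter. Write $D_j=\sum_a\alpha_a(j)P_a$ and $C_j=-\sum_b\alpha_b(j)Q_b$ using the spectral projections of the two rows. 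Then
\[
\sum_j D_jC_j=-\sum_{a,b}\langle\alpha_a,\alpha_b\rangle P_aQ_b .
\]
For even strings, $\langle\alpha_a,\alpha_b\rangle$ equals $3$ if $a=b$ and $-1$ otherwise. Hence
\[
\sum_j D_jC_j=\Id-4\sum_aP_aQ_a ,
\]
and therefore
\[
\sum_j\ac{D_j}{C_j}=2\Id-4\sum_a(P_aQ_a+Q_aP_a).
\]
This is not sign-definite termwise, so I would instead use $\sum_j\ac{D_j}{C_j}=(\sum_j X_j^2)-6\Id$ together with a second SOS in the projections. The candidate is $\sum_a(P_a-Q_a)^2$-type terms plus a term like $(\sum_aP_aQ_a+Q_aP_a)$. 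Equivalently, I would try to show
\[
\Bigl(\textstyle\sum_j X_j^2\Bigr)\le 8\Id
\]
by writing $8\Id-\sum_jX_j^2$ as an explicit SOS in the polynomials $P_a-Q_b$, with coefficients read off from the conference matrix $M$. The reference solution $U_0$, where $|\langle a|U_0|b\rangle|^2=1/3$ for $a\ne b$ and $0$ for $a=b$, is the tight case and will guide the choice of coefficients.

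Combining the two bounds, $S\le \tfrac{3q}{2}+\tfrac{8}{2q}=\tfrac{3q}{2}+\tfrac{4}{q}=2r$, because $q=2r/3$. Only the algebraic relations were used, so the bound holds verbatim in the commuting-operator model. For the almost-quantum relaxation, I would check that the certificate is degree-one in the NPA sense, meaning each square is of a linear combination of single Alice or Bob operators, or of products $AB$. I expect this step, finding the Alice-side certificate for $\sum_jX_j^2\le 8$ on the state, to be the main obstacle. If no operator inequality holds, I would fall back to a state-dependent version: bound $\sum_j\langle X_j^2\rangle$ only where it is paired with Bob via Cauchy--Schwarz.

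\textbf{Rigidity.} At equality, every square annihilates $\psi$. This gives $X_j\psi=qB_j\psi$ and the Alice-side SOS equalities, which I expect to force $\ac{D_j}{C_j}\psi=\frac23\psi$ and $\operatorname{rank}$-type relations such as $P_aQ_aP_a\psi=0$ and $P_aQ_bP_a\psi=\tfrac13P_a\psi$ for $b\ne a$. The reduced-state support is invariant under all these operators, by the standard argument that $A\psi=B\psi$ for Alice $A$ and Bob $B$ implies support invariance. On that support, the projection families $\{P_a\}$ and $\{Q_a\}$ therefore generate a representation of an algebra. I would show this algebra is $M_4(\mathbb C)$, represented as $P_a^0\otimes\Id$ and $Q_a^0\otimes\Id$ up to the phase freedom of $U_0$. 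Those phases can be absorbed by diagonal unitaries, because the unbiasedness pattern is that of the conference matrix, and I would fix them using the $B_j$ consistency. Irreducibility then gives an isometry onto $\mathbb C^4\otimes\mathcal K_A$, and this also yields divisibility of the support dimension by four. The same holds for Bob through $B_j=X_j/q$ on the support.

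Finally, the relation $(A\otimes\Id)\psi=(\Id\otimes B^{T})\psi$ for all $A\in M_4$ forces the $\ket{\Phi_4}\otimes\ket\eta$ form \eqref{eq:extract}. The converse follows by computing $\langle S\rangle=2r$ on any strategy of that form.
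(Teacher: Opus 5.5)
\textbf{Verdict: there is a genuine gap.} Your value bound reduces to an operator inequality that is false, so the problem is not just finding coefficients. The step $S\le\frac{3q}{2}\Id+\frac1{2q}\sum_jX_j^2$ is fine. But $\sum_jX_j^2\le8\Id$, equivalently $\sum_a\ac{P_a}{Q_a}\ge0$, does not follow from \eqref{eq:relations}.

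Take $D_j=D_j^0$, the all-ones matrix $J_4$, the reflection $H=\Id-J_4/2$, and $C_j=-HD_jH$. Both parity constraints hold. The identity $\alpha_a\cdot\alpha_b=4\delta_{ab}-1$ gives $\sum_jD_jHD_j=\Id+J_4/2$, hence
\[
\sum_j\ac{D_j}{C_j}=2J_4-2\Id,\qquad \sum_jX_j^2=4\Id+2J_4,\qquad \sum_a\ac{P_a}{Q_a}=H,
\]
with eigenvalues $6,-2,-2,-2$; $12,4,4,4$; and $1,1,1,-1$ respectively. This is the example of Appendix~\ref{app:trace}. So no sum of squares in the $P_a-Q_b$ can certify your inequality.

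It is worse than that. For $u=\frac12(1,1,1,1)^T$ one has $X_ju=2D_ju$, so $\langle u,X_j^2u\rangle=4$ for each $j$. Hence any certificate of your shape---per-column squares $(X_j-\lambda_jB_j)^2$ followed by an Alice-only operator bound---gives at best $\sum_j(\lambda_j/2+2/\lambda_j)\ge6$, which is the trivial bound. The inequality $\sum_j\Tr X_j^2\le8d$ holds only in trace. That is why Hilbert--Schmidt Cauchy--Schwarz works for maximally entangled states and nowhere else. Your fallback is not yet an argument: its natural version $\langle S\rangle\le(3\sum_j\|X_j\psi\|^2)^{1/2}$ gives $6$ when Alice's marginal is $\ket{u}\bra{u}$.

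The missing idea is that the certificate must couple Alice's difference operators $Y_j=D_j-C_j$ to Bob's cross-column products. Besides $(X_jB_j-q\Id)^2$, the identity \eqref{eq:SOS} uses the squares $(Y_j-V_j/r)^2$ and $(X_j+qB_j-qW_j)^2$, with $V_j=X_iB_k+X_kB_i$ and $W_j=Y_iB_k+Y_kB_i$. The parity constraints enter through $X_iX_k+Y_iY_k=2Y_j$ and $X_iY_k+Y_iX_k=2X_j$. The cross-column terms $J=\sum_jY_j\ac{B_i}{B_k}$ produced by these squares cancel. All residuals lie in the $1+AB$ span, and this is what yields the almost-quantum statement.

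Your rigidity paragraph reads its relations off the equality conditions of a certificate that does not exist, so at present it has no input. Once a valid certificate supplies $X_j\psi=qB_j\psi$, your outline is close to the paper's, with these adjustments:
\begin{itemize}
\item $X_j^2\psi=q^2\psi$ gives $\ac{D_j}{C_j}=\frac23\Id$ on the invariant support directly; no Alice-side equality is needed.
\item The conference-matrix normal form must be derived for $m\times m$ unitary blocks, not scalar phases. The block orthogonality relations do this without reference to Bob.
\item Bob's isometry must be built from Bob's operators alone. Transfer words in the $X_j$, which generate $M_4(\mathbb C)$ on the support, to Bob in reverse order, and check that their images form matrix units on $\supp\rho_B$. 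The relation $(A\otimes\Id)\psi=(\Id\otimes B^T)\psi$ by itself does not give this.
\item You also need the local POVM dilation, and a final step showing that the original effects preserve the original supports.
\end{itemize}
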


The upper bound is proved in Section~\ref{sec:sos}. Equality makes the
positive terms in that proof vanish. As in the general approach developed in,
for example, \cite{Cui}, Section~\ref{sec:rigidity} translates
these vanishing relations into a classification of Alice's measurement
algebra and then constructs the required isometry on Bob's system. This
last step keeps Eve in the ancillary factor and completes the return from
operator relations to the original local strategy.
\section{An exact sum-of-squares bound}\label{sec:sos}
To bound the rectangle score in arbitrary dimension, we express the gap
$2r\Id-S$ as a positive sum of operator squares. The terms are chosen so
that equality also relates each Alice observable to a Bob-local operator;
these relations will determine the optimal state and measurements. The
first two families of squares below carry these relations, the second after substituting those of the first; the third
completes the identity.
Put $Y_j\coloneqq D_j-C_j$ and $Z_j\coloneqq X_jB_j$.
For each $j$, let $i,k$ be the other two indices, in either order, and put
\begin{equation}\label{eq:VW}
V_j\coloneqq X_iB_k+X_kB_i,\qquad W_j\coloneqq Y_iB_k+Y_kB_i.
\end{equation}
\begin{theorem}[Exact certificate]\label{thm:SOS}
For Hermitian operators satisfying \eqref{eq:relations} and the commutation relations stated after it,
\begin{align}\label{eq:SOS}
2r\Id-S={}&\frac{5r}{64}\sum_j(Z_j-q\Id)^2
+\frac r{12}\sum_j(Y_j-V_j/r)^2\notag\\
&+\frac r{192}\sum_j(X_j+qB_j-qW_j)^2.
\end{align}
Consequently $S\leq2\sqrt6\Id$ in every commuting-operator representation.
\end{theorem}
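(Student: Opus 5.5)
The plan is to prove \eqref{eq:SOS} by expanding the right-hand side and reducing every product to a normal form. The reduction uses only the relations \eqref{eq:relations} and the commutation assumptions; the consequence $S\le2\sqrt6\,\Id$ then follows at once. Each summand is the square of a Hermitian operator, and therefore positive. The operators $Z_j=X_jB_j$, $V_j$ and $W_j$ are Hermitian because $X_j$, $Y_j$ are Hermitian Alice operators commuting with the Hermitian $B_k$. Hence $2r\Id-S\ge0$ in any commuting-operator representation, and taking expectations gives the bound. Everything therefore rests on the algebraic identity.

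For the identity I will first gather the algebraic rules that will be used. Since $D_iD_j=D_k$ for distinct indices and the $D$'s commute, every product of $D$'s reduces to $\Id$ or a single $D_l$. Likewise, $C_iC_j=-C_k$. Mixed products $D_iC_j$ do not simplify, and they remain as independent noncommutative monomials. Bob's operators satisfy $B_j^2=\Id$ and otherwise remain free. They commute with Alice's, so every term factors as (Alice word)$\otimes$(Bob word). We also have $X_j^2=2\Id+\ac{D_j}{C_j}$ and $Y_j^2=2\Id-\ac{D_j}{C_j}$, so $X_j^2+Y_j^2=4\Id$.

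I then expand the three families in turn.
\begin{itemize}
\item The first family gives $Z_j^2-2qZ_j+q^2\Id$, with $Z_j^2=X_j^2$. This produces $-S$ terms together with anticommutators $\ac{D_j}{C_j}$.
\item The second family gives $Y_j^2-(Y_jV_j+V_jY_j)/r+V_j^2/r^2$. Here $V_j^2$ contains $X_i^2+X_k^2$ plus cross terms $X_iX_k(B_kB_i)+X_kX_i(B_iB_k)$.
\item The third family gives $X_j^2+q^2\Id+q^2W_j^2+q\ac{X_j}{B_j}-q\ac{X_j}{W_j}-q^2\ac{B_j}{W_j}$.
\end{itemize}
I will sort all resulting monomials by their Bob part: $\Id$, $B_j$, and $B_iB_k$ with $i\ne k$. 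Within each Bob part I sort by the Alice word, reducing through the $D$ and $C$ product rules, for example $D_jD_i=D_k$ and $C_jC_i=-C_k$. The claim to verify is that every coefficient matches the left side. The $\Id$ coefficient must equal $2r$. The coefficient of $X_jB_j$ must equal $-1$. All other monomials must cancel: the pieces $D_jC_j+C_jD_j$, the words $D_iC_k$ attached to $B_iB_k$, and so on.

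The main obstacle is bookkeeping the Bob-quadratic terms $B_iB_k$. The cross terms from $V_j^2$, $W_j^2$ and $\ac{Y_j}{V_j}$ each carry Alice words such as $X_iX_k$, $Y_iY_k$, $Y_jX_i$ and $X_jY_i$. These words must cancel after reduction through the parity relations, and the sign $C_1C_2C_3=-\Id$ is exactly what should make them cancel. I would do this systematically by fixing one ordered pair, say $B_1B_2$, and collecting its Alice coefficient from all three families. By the symmetry of the construction under permuting indices, checking one pair suffices. The terms linear in $B$ involve words like $Y_jW_j$ and $X_jW_j$; these collapse via $D_jD_i=D_k$ to $B_k$-multiples and should combine into $-Z_k$ with the right weight. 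The weights $5r/64$, $r/12$ and $r/192$, with $q=2r/3$ and $r^2=6$, are then checked to produce total coefficient $-1$ for each $Z_j$ and $2r$ for $\Id$. As a consistency check, the reference strategy on $\ket{\Phi_4}$ attains $\langle S\rangle=2r$, so each square must vanish on it. This matches the relations $Z_j\Phi_4=q\Phi_4$ and $B_j^0=(X_j^0)^T/q$.
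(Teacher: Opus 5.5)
Your plan is the same as the paper's proof: verify \eqref{eq:SOS} by direct expansion and read off $S\le2\sqrt6\,\Id$ from positivity of the Hermitian squares. The paper only streamlines the bookkeeping, via the row identities \eqref{eq:rowcancel} (in particular $X_iX_k+Y_iY_k=2Y_j$ and $X_iY_k+Y_iX_k=2X_j$) and the summed identities \eqref{eq:cancela}--\eqref{eq:cancelc}. When you carry it out, correct two points of your sketch: $\ac{Y_j}{V_j}$ is linear in Bob's operators and, like $\ac{X_j}{W_j}$, sums to $4S$ only after pairing the two indices $j\ne k$ whose terms contain $B_k$ (their mixed words $\ac{D_j}{C_i}-\ac{D_i}{C_j}$ cancel), and the Bob-quadratic words do not cancel among themselves, because the cross terms of $V_j^2$ and $W_j^2$ reduce to $2J$ with $J\coloneqq\sum_jY_j\ac{B_i}{B_k}$, which is cancelled only by the third-family contribution $-\sum_l\ac{B_l}{W_l}=-2J$, both carrying weight $r/72$.
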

\begin{proof}
Every displayed residual is Hermitian, and all coefficients are positive.
We verify the identity directly. For distinct $i,k,j$, the row relations give
\begin{equation}\label{eq:rowcancel}
X_j^2+Y_j^2=4\Id,\quad
X_iX_k+Y_iY_k=2Y_j,\quad
X_iY_k+Y_iX_k=2X_j.
\end{equation}
Define $J\coloneqq\sum_jY_j\ac{B_i}{B_k}$. Expanding ordered products and using
\eqref{eq:rowcancel} gives
\begin{align}
\sum_j(V_j^2+W_j^2)&=24\Id+2J,\label{eq:cancela}\\
\sum_j\ac{W_j}{B_j}&=2J,\label{eq:cancelb}\\
\sum_j\ac{Y_j}{V_j}&=4S,
\qquad \sum_j\ac{X_j}{W_j}=4S.\label{eq:cancelc}
\end{align}
For example, in the first sum each $X_j^2+Y_j^2$ appears twice;
the remaining term for the complementary pair is $2Y_j\ac{B_i}{B_k}$.
In the last two identities, collect the coefficient of each $B_j$ and use
the third identity of \eqref{eq:rowcancel} in both orders.

Multiply the right side of \eqref{eq:SOS} by $192/r$.
Its coefficients become $15,16,1$. The terms containing
$\sum X_j^2$ and $\sum Y_j^2$ sum to $192\Id$;
the scalar terms from $q\Id$ and $qB_j$ sum to $48q^2\Id=128\Id$.
Since $16/r^2=q^2=8/3$, the $V,W$ squares add $(8/3)(24\Id+2J)$, and the
$B,W$ cross terms subtract $(8/3)2J$. The Bell terms have coefficients
$-30q$ from the first family, $-64/r$ from the second and $2q-4q$ from the
third; since $q=2r/3$ and $r^2=6$, they total $-32r$. The result is $384\Id-32rS$,
which equals $(192/r)(2r\Id-S)$.
\end{proof}

Together with the reference strategy, the certificate proves the value
assertion of Theorem~\ref{thm:rigidity}. For an approximately optimal strategy
it also gives the following robustness estimates for the relations that
drive the exact equality analysis.

\begin{corollary}[Residual estimate]\label{cor:residual}
For operators as in Theorem~\ref{thm:SOS} and a unit vector $\psi$, let $\gamma\coloneqq2r-\langle S\rangle_\psi$, which is
twelve times the score deficit from the optimum.
Then
\begin{align*}
\sum_j\|(X_j-qB_j)\psi\|^2&\leq\frac{64}{5r}\gamma,\\
\sum_j\|(Y_j-V_j/r)\psi\|^2&\leq\frac{12}{r}\gamma.
\end{align*}
These estimates control, in the norm $\|\cdot\,\psi\|$, the relations $X_j\psi=qB_j\psi$ and $Y_j\psi=V_j\psi/r$; at the optimum both hold exactly, and substituting the first into the second gives the Bob-local form $Y_j\psi=\tfrac23\ac{B_i}{B_k}\psi$.
\end{corollary}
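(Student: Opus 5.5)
The estimates follow by taking the expectation of the certificate \eqref{eq:SOS} of Theorem~\ref{thm:SOS} in the unit vector $\psi$, discarding nonnegative terms, and rewriting one residual using the commutation relations.

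\textbf{Expectation of the certificate.} Each term on the right of \eqref{eq:SOS} has the form $cR^2$ with $c>0$ and $R$ Hermitian. Hence
\[
\langle cR^2\rangle_\psi=c\|R\psi\|^2\geq0 .
\]
Taking $\langle\cdot\rangle_\psi$ of \eqref{eq:SOS}, the left side is $\gamma=2r-\langle S\rangle_\psi$. Dropping all families except one gives
\[
\frac{5r}{64}\sum_j\|(Z_j-q\Id)\psi\|^2\leq\gamma,
\qquad
\frac r{12}\sum_j\|(Y_j-V_j/r)\psi\|^2\leq\gamma .
\]
The second inequality is already the second claimed bound.

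\textbf{First bound.} I convert $Z_j-q\Id$ into $X_j-qB_j$. Since $X_j$ is an Alice operator and $B_j$ a Bob operator, they commute, and $B_j^2=\Id$. Therefore
\[
B_j(Z_j-q\Id)=B_jX_jB_j-qB_j=X_jB_j^2-qB_j=X_j-qB_j .
\]
Because $B_j$ is a Hermitian reflection, it is unitary, so
\[
\|(X_j-qB_j)\psi\|=\|(Z_j-q\Id)\psi\| .
\]
Summing over $j$ and using the first displayed inequality gives the bound $64\gamma/(5r)$.

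\textbf{Case $\gamma=0$.} Both sums vanish, so $X_j\psi=qB_j\psi$ and $Y_j\psi=V_j\psi/r$ for every $j$. For $V_j=X_iB_k+X_kB_i$, I move each Bob operator to the left, which is allowed since Alice and Bob operators commute. Then
\[
X_iB_k\psi=B_kX_i\psi=qB_kB_i\psi,
\]
and similarly for the other term, so $V_j\psi=q\ac{B_i}{B_k}\psi$. Since $q/r=2/3$, this yields the Bob-local relation $Y_j\psi=\tfrac23\ac{B_i}{B_k}\psi$.

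The only step that is not pure bookkeeping is the one relating $Z_j-q\Id$ to $X_j-qB_j$. It works because the unitary $B_j$ is applied on the left, so the $\psi$-norm is preserved; applying it on the right would change the vector and spoil the estimate.
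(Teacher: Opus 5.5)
Your proposal is correct and follows the paper's proof: take the expectation of the certificate \eqref{eq:SOS} in $\psi$, drop the other nonnegative square families, and use that left multiplication by the reflection $B_j$ turns $Z_j-q\Id$ into $X_j-qB_j$ without changing the residual norm. Your treatment of the case $\gamma=0$, moving the Bob factors left with $q/r=2/3$, likewise matches the transfer relation \eqref{eq:transfer} in the proof of Lemma~\ref{lem:support}.
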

\begin{proof}
Drop the other nonnegative square families in \eqref{eq:SOS}. Multiplication
by the reflection $B_j$ preserves the first residual norm.
\end{proof}

The same identity also determines the value of a standard semidefinite
relaxation. In the Navascu\'es--Pironio--Ac\'in (NPA) hierarchy
\cite{NPA}, level $1+AB$ uses words containing at most one outcome
projector from each party as Gram-matrix indices. For
two parties this is the almost-quantum set \cite{almostquantum}. Each row is
one four-outcome measurement whose outcomes are the four legal, that is
parity-respecting, sign strings; their orthogonality supplies the parity
reductions at degree two (Appendix~\ref{app:npa}).

\begin{corollary}[Exact almost-quantum value]\label{cor:almost}
For the rectangle's legal outcome measurements, NPA level $1+AB$ has value
\[
\omega_{1+AB}(2,3)=\omega_Q(2,3)=\frac{1+\sqrt{2/3}}2.
\]
\end{corollary}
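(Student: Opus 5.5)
Two inequalities are needed: $\omega_{1+AB}(2,3)\ge\omega_Q(2,3)$, which holds because every quantum correlation (in particular the reference strategy of Section~\ref{sec:rectangle}) produces a feasible level-$1+AB$ moment matrix; and $\omega_{1+AB}(2,3)\le(1+\sqrt{2/3})/2$. The second is the substance, and I plan to obtain it by rewriting the certificate of Theorem~\ref{thm:SOS} as a statement about moment matrices. The identity \eqref{eq:SOS} writes $2r\Id-S$ as $\sum_\ell c_\ell R_\ell^*R_\ell$ with $c_\ell>0$. If each residual $R_\ell$ is a linear combination of indexing words of level $1+AB$, then $R_\ell^*R_\ell$ is represented by a Gram quadratic form $v_\ell^*\Gamma v_\ell\ge0$ on the moment matrix $\Gamma$. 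To finish, it remains to check that the identity \eqref{eq:SOS}, after all reductions, holds as an equality of linear functionals on entries of $\Gamma$, using only the constraints that the level-$1+AB$ set actually imposes.

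First, I will pass to projectors. For row $x$, write $D_j=\sum_a\alpha_a(j)P_a$ and $C_j=-\sum_a\alpha_a(j)Q_a$ in terms of the four legal outcome projectors, and let $B_j=2\Pi_j-\Id$ for Bob's projector $\Pi_j$. Each residual is then a linear combination of the following words: $\Id$; single Alice projectors; single Bob projectors; and products $P_a\Pi_k$, $Q_a\Pi_k$ of one Alice and one Bob projector. The candidates are $Z_j-q\Id$, $Y_j-V_j/r$ with $V_j=X_iB_k+X_kB_i$, and $X_j+qB_j-qW_j$. All of these words lie in the level-$1+AB$ index set. Consequently each term $\langle R_\ell^*R_\ell\rangle$ is a well-defined nonnegative quadratic form in $\Gamma$.

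Second, I will match the expansion in the proof of Theorem~\ref{thm:SOS}. There the expansion of the squares used several relations. The reflection identities $D_j^2=C_j^2=B_j^2=\Id$ hold at the level of $\Gamma$ because $P_aP_b=\delta_{ab}P_a$, $\sum_aP_a=\Id$, and likewise for $Q$ and $\Pi$. The relations $D_1D_2D_3=\Id$ and $C_1C_2C_3=-\Id$ entered only through \eqref{eq:rowcancel}. Those identities are quadratic in same-row Alice operators, and they follow from orthogonality of the $P_a$ because each $\alpha_a$ is even; this is the ``parity reduction at degree two'' noted before the corollary and recorded in Appendix~\ref{app:npa}. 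The remaining ingredient is commutation of Alice and Bob operators. In the moment matrix this becomes the identification of entries with equal reduced words under $AB=BA$ and projector idempotence, which level $1+AB$ imposes. I then have to go through each product occurring in \eqref{eq:cancela}--\eqref{eq:cancelc}, for example $\langle X_iB_kX_kB_i\rangle$, and confirm that its reduction uses at most one Alice-projector pair from a single row together with Bob products that are consistent with the indices present. A product such as $B_kB_i$ arises from words of the form $(A\Pi_k)^*(A'\Pi_i)$, giving $\Pi_kA^*A'\Pi_i$. After commuting, this is a legitimate Gram entry, and no further reduction is needed because the identity only requires the combinations, for instance $J$, to cancel formally between the terms.

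The main obstacle I expect is this bookkeeping. I must make sure that no step of the verification, in particular the cancellation of $J$ between \eqref{eq:cancela} and \eqref{eq:cancelb}, and the collapse of $\ac{Y_j}{V_j}$ to $4S$, relies on relations that level $1+AB$ does not impose. Examples of such relations would be commuting Alice's $D$ past $C$, or reductions that involve two Alice projectors from different rows inside one entry. My plan is to expand every square as a formal sum of Gram entries $\langle w^*w'\rangle$. I will then check that the coefficients cancel after applying only same-row orthogonality, Bob idempotence and Alice--Bob commutation. Where a cross-row product $D_iC_k$ appears, I expect it to pair with $C_kD_i$ symmetrically and cancel, since $Y_j=D_j-C_j$ and $X_j=D_j+C_j$ enter in complementary combinations. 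Once this is done, $\tfrac12+\langle S\rangle/12\le\tfrac12+2r/12$ holds for every feasible $\Gamma$, and this gives the claimed value.
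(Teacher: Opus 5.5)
Your proposal is correct and is essentially the paper's argument: the reference strategy supplies a feasible Gram matrix for the lower bound. For the upper bound, each residual of \eqref{eq:SOS} lies in the span of the $1+AB$ words, and every step of the expansion uses only same-row outcome orthogonality and completeness together with Alice--Bob interchange, exactly as in Appendix~\ref{app:npa}; the same-row reductions must hold inside moments carrying up to two letters of the other party, as in $X_iX_kB_kB_i$. One small correction to your expectation: cross-row products cancel against identically ordered products of opposite sign (e.g.\ $D_iC_k$ in $X_iX_k$ against $-D_iC_k$ in $Y_iY_k$), not against the reversed word $C_kD_i$, which the moment matrix does not identify with $D_iC_k$.
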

\begin{proof}
Expand the coordinate observables in their outcome projectors. Each
residual in \eqref{eq:SOS} belongs to the $1+AB$ word span, and every
identity used in its expansion follows from outcome orthogonality and
completeness within words containing at most two letters from each party.
The detailed moment constraints are given in Appendix~\ref{app:npa}.
Consequently the moment functional sends the right side of
\eqref{eq:SOS} to a sum of nonnegative Gram quadratic forms and gives
the same upper bound. The reference quantum strategy supplies an attaining
feasible Gram matrix.
\end{proof}

The natural alphabets of \cite[Sec.~3]{AW} consist of the legal strings, so their level-$1+AB$ relaxation is the one above, and this proves the case $n=3$ of the almost-quantum expression conjectured for $2\times n$ rectangles in
\cite[Table II, Eq.~(34), and Conjecture 17]{AW}. For comparison,
Appendix~\ref{app:trace} gives a shorter proof for maximally entangled
states and an example explaining the limitation of a direct operator
extension of that argument. Appendix~\ref{app:second} records a second
sum-of-squares certificate for the same Bell operator, whose residuals vanish on the same optimal states.
\section{Rigidity of optimal strategies}\label{sec:rigidity}
We prove the classification in Theorem~\ref{thm:rigidity}. The argument
answers two questions: which row measurements can saturate the certificate,
and how their relations determine the state up to separate local isometries.
First, vanishing squares make the reduced-state supports invariant. The row
measurements on these supports then generate one copy of the full
$4\times4$ matrix algebra, with an arbitrary multiplicity.
\begin{lemma}[Support relations]\label{lem:support}
Let a finite-dimensional strategy with projective measurements satisfying \eqref{eq:relations},
with a purification $\psi_{ABE}$, attain
$\langle S\rangle=2r$. Then $[\rho_A,D_j]=[\rho_A,C_j]=0$, and on Alice's reduced-state support the two row triples
obey
\begin{equation}\label{eq:support}
\ac{D_j}{C_j}=\tfrac23\Id.
\end{equation}
In particular, the restrictions to that support are again internally commuting reflection triples with products $+\Id$ and $-\Id$.
\end{lemma}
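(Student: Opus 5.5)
The plan is to read off exact vanishing relations from the certificate of Theorem~\ref{thm:SOS} and then turn them into statements about $\rho_A$. I will use one elementary device for the second step: an Alice operator whose action on $\psi$ agrees with that of a Hermitian operator on $BE$ must commute with $\rho_A$.

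\textbf{Step 1 (vanishing residuals).} If $\langle S\rangle_\psi=2r$, then $\gamma=0$ in Corollary~\ref{cor:residual}, since every term on the right of \eqref{eq:SOS} is a positive multiple of a square of a Hermitian operator. This gives $X_j\psi=qB_j\psi$ and $Y_j\psi=V_j\psi/r$ for every $j$. Because $X_i$ commutes with $B_k$, we get $X_iB_k\psi=B_kX_i\psi=qB_kB_i\psi$, and symmetrically $X_kB_i\psi=qB_iB_k\psi$. Hence $V_j\psi=q\ac{B_i}{B_k}\psi$, and with $q/r=2/3$,
\[
Y_j\psi=\tfrac23\ac{B_i}{B_k}\psi .
\]
Thus $X_j\psi$ and $Y_j\psi$ are each equal to a Hermitian Bob-local operator applied to $\psi$.

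\textbf{Step 2 (commutation with $\rho_A$).} Let $A$ be a Hermitian operator on Alice's space and $B$ a Hermitian operator on $BE$ with $A\psi=B\psi$. Then
\[
A\rho_A=\Tr_{BE}(B\ket\psi\bra\psi)=\Tr_{BE}(\ket\psi\bra\psi B)=\Tr_{BE}(\ket\psi\bra{B\psi})=\Tr_{BE}(\ket\psi\bra{A\psi})=\rho_AA .
\]
The second equality is cyclicity of the partial trace for operators acting only on $BE$. Applying this to $X_j$ and to $Y_j$ shows that both commute with $\rho_A$. Therefore $D_j=(X_j+Y_j)/2$ and $C_j=(X_j-Y_j)/2$ commute with $\rho_A$.

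\textbf{Step 3 (anticommutator on the support).} Using $X_j\psi=qB_j\psi$ and $[X_j,B_j]=0$,
\[
X_j^2\psi=qX_jB_j\psi=qB_jX_j\psi=q^2B_j^2\psi=\tfrac83\psi .
\]
Since $X_j^2=2\Id+\ac{D_j}{C_j}$, this says $\ac{D_j}{C_j}\psi=\tfrac23\psi$. Tracing out $BE$ gives $\ac{D_j}{C_j}\rho_A=\tfrac23\rho_A$, so $\ac{D_j}{C_j}$ acts as $\tfrac23\Id$ on the range of $\rho_A$, which is \eqref{eq:support}.

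\textbf{Step 4 (restrictions).} Because $D_j$ and $C_j$ are Hermitian and commute with $\rho_A$, they leave its support invariant. Their restrictions are therefore Hermitian reflections, and they inherit the mutual commutation and the product relations $D_1D_2D_3=\Id$ and $C_1C_2C_3=-\Id$ from \eqref{eq:relations}.

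The only delicate point is Step 2. There one must use that the Bob-side operators $\ac{B_i}{B_k}$ and $B_j$ are Hermitian, so that the ket–bra manipulation is valid. One must also obtain the commutation of $D_j$ and $C_j$ separately, through both $X_j$ and $Y_j$. The first residual family alone only controls $X_j$, and that is exactly why the second family is needed.
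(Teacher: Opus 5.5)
Your proof is correct and follows the paper's argument essentially step for step. You use the two transfer relations $X_j\psi=qB_j\psi$ and $Y_j\psi=\tfrac23\ac{B_i}{B_k}\psi$ from the first two square families, the partial-trace argument that a Hermitian Alice operator agreeing on $\psi$ with a Hermitian Bob-local one commutes with $\rho_A$, and the squared transfer $X_j^2\psi=\tfrac83\psi$. The only cosmetic difference is that you pass from $L\psi=0$ to vanishing on $\supp\rho_A$ via $L\rho_A=\Tr_{BE}(L\ket\psi\bra\psi)=0$, whereas the paper uses $\|L\psi\|^2=\Tr(L^*L\rho_A)$; the two are equivalent.
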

\begin{proof}
The first and second square families in \eqref{eq:SOS} imply the transfer relations
\begin{equation}\label{eq:transfer}
X_j\psi=qB_j\psi,\qquad
Y_j\psi=\frac1r(X_iB_k+X_kB_i)\psi
=\frac23\ac{B_i}{B_k}\psi.
\end{equation}
The right sides are Hermitian Bob-local operators. Whenever Hermitian
operators satisfy $L_A\psi=R_B\psi$, taking a partial trace of this relation
and its adjoint gives $L_A\rho_A=\rho_A L_A$.
Thus $X_j,Y_j$, and hence $D_j,C_j$, preserve $\supp\rho_A$.
Squaring the first transfer, using cross-party commutation and $B_j^2=\Id$,
gives $X_j^2\psi=q^2\psi$. An Alice operator $L$ with $L\psi=0$ vanishes
on $\supp\rho_A$, because $\|L\psi\|^2=\Tr(L^*L\rho_A)$; this proves the anticommutator identity there.
\end{proof}

The remaining classification is entirely on Alice's support. The
anticommutator relations fix the overlaps between the two row measurements.
We use those overlaps to identify their joint algebra before returning to
the shared state.
For scalar blocks ($m=1$ below), the conference-matrix normal form obtained
in the proof is the order-four
case of the unit weighing matrix classification in
\cite[Theorem 12]{BKR}. The argument below treats unitary blocks and
identifies the generated algebra.

\begin{lemma}[Classification of the row algebra]\label{lem:conference}
Every finite-dimensional pair of internally commuting reflection triples with products
$+\Id,-\Id$ and $\ac{D_j}{C_j}=2\Id/3$ is unitarily equivalent to
the reference triples $D_j^0,C_j^0$ tensored with $\Id_m$.
The underlying space has dimension $4m$, and the algebra generated by the two triples is
$M_4(\mathbb C)\otimes\Id_m$.
\end{lemma}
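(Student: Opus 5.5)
The approach is to decompose the space into the joint eigenspaces of the commuting triple $D_1,D_2,D_3$. Since $D_1D_2D_3=\Id$, the joint eigenvalues are the four even strings $\alpha_a$. Write $P_a$ for the corresponding spectral projections, with ranges $\mathcal H_a$, so that $D_j=\sum_a(\alpha_a)_jP_a$. In the same way, $C_j=-\sum_a(\alpha_a)_jQ_a$ for the spectral projections $Q_a$ of the $C$ triple, since $-C_1,-C_2,-C_3$ is a commuting triple with product $+\Id$.

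The first step is to compute the blocks $P_aQ_bP_a$ from the anticommutator relations. Compressing $\ac{D_j}{C_j}=\tfrac23\Id$ by $P_a$ gives $2(\alpha_a)_jP_aC_jP_a=\tfrac23P_a$, that is, $P_aC_jP_a=\tfrac13(\alpha_a)_jP_a$. Write $T_{ab}\coloneqq P_aQ_bP_a$. Then $\sum_bT_{ab}=P_a$ and $-\sum_b(\alpha_b)_jT_{ab}=\tfrac13(\alpha_a)_jP_a$ for $j=1,2,3$. The four vectors $(1,\alpha_b)$ are orthogonal in $\mathbb R^4$, so this $4\times4$ system is invertible and yields $T_{aa}=0$ and $T_{ab}=\tfrac13P_a$ for $b\neq a$. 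Symmetrically, $Q_bP_aQ_b=\tfrac13Q_b$ for $a\neq b$, and $P_aQ_a=0$. Taking traces gives $\tfrac13\dim\mathcal H_a=\Tr(P_aQ_b)=\tfrac13\dim\mathcal H'_b$ for every $a\neq b$. Since there are four indices, all eight spectral subspaces have a common dimension $m$, so the total dimension is $4m$.

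The second step builds the unitary blocks. For $a\neq b$, the relations above say that $\sqrt3\,P_aQ_bP_b$ restricted to $\mathcal H_b$ is an isometry. I need to compare the two row bases, so I use $Q_b$ as well. The key construction is as follows. Fix an orthonormal basis $\{e_k\}$ of $\mathcal H_1$, and transport it to each $\mathcal H_a$ ($a\neq1$) by the unitary $\sqrt3\,P_aQ_bP_1$ for a suitably chosen $b\notin\{1,a\}$. Then express $Q_b=\sum_{a,c}P_aQ_bP_c$ in these bases as a $4\times4$ block matrix $\frac13(u^{(b)}_a{}^{*}u^{(b)}_c)$. Here $u^{(b)}_a\coloneqq\sqrt3\,P_aQ_b|_{\mathcal H'_b}$ is a unitary from $\mathcal H'_b$ to $\mathcal H_a$ for $a\neq b$, and it is zero for $a=b$. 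Thus $Q_b=\tfrac13 w_bw_b^*$, where $w_b$ is the column of blocks $(u^{(b)}_a)_a$. Assembling the columns gives a $4m\times4m$ matrix $U=\frac1{\sqrt3}[w_1,\dots,w_4]$. Its blocks are zero on the diagonal and unitary off it, and $U$ is unitary because the $Q_b$ are orthogonal and complete. Consequently $C_j=-UD_jU^*$ with $D_j$ diagonal.

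The third step, which I expect to be the main obstacle, is to show that the off-diagonal unitary blocks can be brought simultaneously to the sign pattern of $M\otimes\Id_m$. The plan is to use the freedom of block-diagonal unitaries $\operatorname{diag}(g_a)$ on the left and $\operatorname{diag}(h_b)$ on the right to set the first row and first column of blocks to $\pm\Id$. Orthogonality of the block columns, $\sum_a u_a^{(b)*}u_a^{(c)}=0$, then forces relations such as $x+y+z=0$ among products of the remaining unitaries. There are three unitary terms for order four, and a sum of three unitaries can vanish only in restricted configurations. Combined with the orthogonality of the other column pairs, I expect this to force the remaining blocks to be $\pm$ a common unitary $W$. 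With $m=1$ this reproduces the weighing-matrix argument of \cite{BKR}. The additional sign constraint is that $U$ must intertwine the prescribed labelling, meaning that $Q_b$ carries the label $\alpha_b$. I will impose this constraint and argue that it pins the sign pattern to $M$ up to the diagonal sign and permutation symmetries, which can be absorbed.

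Once $C_j$ is known, I will identify the generated algebra. The set of all products $P_aQ_bP_c$ contains the matrix units $E_{ac}\otimes\Id_m$, for example via $3P_aQ_bP_c$ composed with transports. Hence the generated algebra is $M_4(\mathbb C)\otimes\Id_m$, and the unitary equivalence with $D_j^0\otimes\Id_m$, $C_j^0\otimes\Id_m$ follows.
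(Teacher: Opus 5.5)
\textbf{Gap in Step 3.} Steps 1 and 2 and your closing matrix-unit argument follow the paper's proof closely:
\begin{itemize}
\item joint spectral projections;
\item compressed anticommutators giving $P_aQ_bP_a=\tfrac13(1-\delta_{ab})P_a$;
\item a common dimension $m$ for all eight ranges;
\item a change-of-basis unitary $U$ with vanishing diagonal blocks and off-diagonal blocks equal to unitaries divided by $\sqrt3$.
\end{itemize}
Three normalizations need fixing, all harmless. First, $P_aQ_bP_b=0$ because $Q_bP_b=0$, so the isometry you want is $\sqrt3\,P_a|_{Q_b\mathcal H}$. Second, the transport $\mathcal H_1\to\mathcal H_a$ is $3P_aQ_bP_1$, since $P_1Q_bP_aQ_bP_1=\tfrac19P_1$; the transport is not needed anyway. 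Third, the $(a,c)$ block of $Q_b$ is $\tfrac13u^{(b)}_a u^{(b)*}_c$.

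Step 3 is the real gap. You flag it as the main obstacle but only announce the outcome (``I expect this to force\dots''), and the mechanism you describe is wrong.

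Because the diagonal blocks of $U$ vanish, orthogonality of two distinct block rows $a\neq c$, $\sum_b U_{ab}U_{cb}^*=0$, has only \emph{two} nonzero terms, those with $b\notin\{a,c\}$. The same holds for block columns. There is no three-term relation. If there were one, it would force nothing: $\Id+\omega\Id+\omega^2\Id=0$ for a primitive cube root of unity $\omega$ shows that sums of three unitaries vanish in a continuum of configurations.

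With the two-term relations the step is immediate. First normalize the first block row and column to $(0,\Id,\Id,\Id)$ by $U\mapsto GUH^*$, with block-diagonal unitaries $G,H$. Orthogonality with the first block row and the first block column then leaves a single unitary $A$:
\[
\sqrt3\,U=\begin{pmatrix}0&\Id&\Id&\Id\\ \Id&0&A&-A\\ \Id&-A&0&A\\ \Id&A&-A&0\end{pmatrix}.
\]
To remove $A$, multiply the first block row on the left by $A$ and block columns $2,3,4$ on the right by $A^*$. Negating the first block column then gives $M\otimes\Id_m$. This is exactly the paper's argument.

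Your remarks on labels are also confused. The only admissible normalizations are block-diagonal changes of orthonormal basis inside the ranges of the $P_a$ and $Q_b$. These leave all eight projections, and hence every $D_j$ and $C_j$, unchanged. A relabeling of the $\alpha_a$ is not by itself a unitary equivalence of the given triples, so it cannot simply be ``absorbed.'' Fortunately none is needed. The label matching is already encoded in the zero diagonal ($P_aQ_a=0$), and the two-term relations fix the pattern up to block-diagonal basis changes. Once you supply this step, your proof coincides with the paper's.
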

\begin{proof}
The joint spectral projections are
\[
P_a=\tfrac14(\Id+\sum_j\alpha_{aj}D_j),\qquad
Q_b=\tfrac14(\Id-\sum_j\alpha_{bj}C_j),
\]
where the even sign strings satisfy
$\alpha_a\cdot\alpha_b=4\delta_{ab}-1$.
Compressing the anticommutator identities to the two spectral subspaces gives
\[
P_aC_jP_a=\frac{\alpha_{aj}}3P_a,\qquad
Q_bD_jQ_b=-\frac{\alpha_{bj}}3Q_b.
\]
Consequently,
\begin{equation}\label{eq:overlap}
P_aQ_bP_a=\begin{cases}0,&a=b,\\P_a/3,&a\ne b,\end{cases}
\qquad
Q_bP_aQ_b=\begin{cases}0,&a=b,\\Q_b/3,&a\ne b.\end{cases}
\end{equation}
The diagonal identities imply $P_aQ_a=0$.
For $a\ne b$, the map $\sqrt3\,P_a|_{Q_b\mathcal H}$ is a unitary
from $Q_b\mathcal H$ onto $P_a\mathcal H$: its adjoint products are
$3Q_bP_aQ_b=Q_b$ and $3P_aQ_bP_a=P_a$.
For any two $P$ ranges, choose $b$ different from their indices; both
are unitarily equivalent to $Q_b\mathcal H$. Thus all eight spectral
subspaces have a common dimension $m$, and the full space has dimension $4m$.

Choose orthonormal bases for the four $P$ ranges and four $Q$ ranges. Their
change-of-basis unitary $U$, with rows indexed by the $P$ bases and columns by
the $Q$ bases, has $m\times m$ blocks: diagonal blocks are zero and every
off-diagonal block is a unitary divided by $\sqrt3$.
Block-diagonal changes of bases make the first row and first column of
$\sqrt3 U$ equal to identity blocks off the diagonal. Write the result as
\[
\begin{pmatrix}
0&\Id&\Id&\Id\\\Id&0&A&B\\\Id&C&0&D\\\Id&E&F&0
\end{pmatrix}.
\]
Orthogonality with the first row gives $B=-A,D=-C,F=-E$;
orthogonality with the first column gives $E=-C,F=-A$.
Thus $C=-A,D=A,E=A,F=-A$. Multiplying block row 1 on the left by $A$
and block columns 2,3,4 on the right by $A^*$ removes the remaining unitary.
Negating the first column of the resulting scalar conference matrix gives
\eqref{eq:matrix}.

Each of these gauges only changes the chosen bases inside the ranges of the
$P_a$ or of the $Q_b$, so all eight projections are unchanged.
In the resulting $P$-adapted basis, $P_a=P_a^0\otimes\Id_m$ and
$Q_b=(U_0\otimes\Id_m)(P_b^0\otimes\Id_m)(U_0\otimes\Id_m)^*=Q_b^0\otimes\Id_m$;
hence $D_j=D_j^0\otimes\Id_m$ and $C_j=C_j^0\otimes\Id_m$.

Finally, for $a\ne c$, choose $b$ distinct from both.
In the scalar form, $P_aQ_bP_c$ is a nonzero scalar multiple of
$\ket a\bra c\otimes\Id_m$. Together with the $P_a$, these supply all
matrix units. This proves the algebra assertion.
\end{proof}

The row algebra now gives an Alice-local four-dimensional factor. To locate the
matching factor on Bob, transfer words in Alice's generators to Bob in
reverse order. The resulting matrix units identify a Bob-local four-dimensional factor
and force the state to be maximally entangled on the two extracted factors.

\begin{proof}[Proof of Theorem~\ref{thm:rigidity}]
The value follows from Theorem~\ref{thm:SOS} and the construction in
Section~\ref{sec:rectangle}. For the equality assertion, first dilate the POVMs locally. Lemmas \ref{lem:support} and
\ref{lem:conference} identify Alice's support with
$\mathbb C^4\otimes\mathcal K_A$, with the reference operators $D_j^0,C_j^0$ on
the first factor. The first transfer in \eqref{eq:transfer} also implies
$[\rho_B,B_j]=0$, so Bob's support is invariant.

The $X_j$ alone generate the row algebra. Indeed the transfers imply,
on Alice's support,
\begin{equation}\label{eq:Yrecover}
Y_j=\tfrac14\ac{X_i}{X_k}.
\end{equation}
To verify this, note that the first transfer relation and cross-party
commutation move every word in the $X_j$ to Bob in reverse order:
\begin{equation}\label{eq:words}
X_{j_1}\cdots X_{j_n}\psi=q^nB_{j_n}\cdots B_{j_1}\psi.
\end{equation}
Apply this to $X_iX_k+X_kX_i$, use $q^2=8/3$, and compare with the second
relation of \eqref{eq:transfer}. Thus $D=(X+Y)/2$ and $C=(X-Y)/2$ lie in the
algebra generated by $X$.

For a polynomial in the $X_j$, transfer each word to Bob by \eqref{eq:words}.
If a polynomial vanishes on Alice's support, its reversed Bob polynomial
annihilates $\psi$ and hence vanishes on Bob's support.
This gives a well-defined unital $*$-antihomomorphism
\[
\Theta:M_4(\mathbb C)\otimes\Id_{\mathcal K_A}\longrightarrow\mathcal B(\supp\rho_B),
\qquad \Theta(X_j)\coloneqq qB_j.
\]
It is injective: a zero Bob image annihilates $\psi$, so its Alice
preimage vanishes on Alice's support. Put $F_{ab}\coloneqq\Theta(E_{ba}\otimes\Id_{\mathcal K_A})$ for the
usual matrix units $E_{ab}$. Then
$F_{ab}F_{cd}=\delta_{bc}F_{ad}$, $F_{ab}^*=F_{ba}$, and
$\sum_aF_{aa}=\Id$. Choosing a basis of $\operatorname{ran}F_{11}$ and
transporting it with the $F_{a1}$ gives a Bob-only identification
$\supp\rho_B\simeq\mathbb C^4\otimes\mathcal K_B$.

By \eqref{eq:words} and linearity, $x\psi=\Theta(x)\psi$ for every $x$ in
the row algebra. In particular, every matrix unit satisfies
$(E_{ab}\otimes\Id_{\mathcal K_A})\psi=(E_{ba}\otimes\Id_{\mathcal K_B})\psi$, where the left operator acts on Alice's system and the right on Bob's.
The diagonal relations restrict the $\mathbb C^4\otimes\mathbb C^4$ components of $\psi$ to
$\sum_a\ket{a,a}\otimes\eta_a$; the off-diagonal relations give
$\eta_a=\eta_b$ for all $a,b$. Normalization proves \eqref{eq:extract}.
Bob's isometry is constructed from operators on Bob's system alone.
Under these identifications $\Theta(E_{ab}\otimes\Id_{\mathcal K_A})=E_{ba}\otimes\Id_{\mathcal K_B}$ for all $a,b$. Since $\Theta(X_j)=qB_j$ and $X_j=X_j^0\otimes\Id_{\mathcal K_A}$, linearity gives $qB_j=(X_j^0)^T\otimes\Id_{\mathcal K_B}$, so Bob's observables have the claimed form.

The dilated reduced-state supports are the images of the original supports.
Their invariance under the dilated row and column projectors implies
that the original POVM effects leave the original supports invariant and form the
corresponding projective measurements on them: for a dilation embedding $J$ and a dilated projector
$\Pi$, the original effect is $J^*\Pi J$, and $\Pi J=J(J^*\Pi J)$ there. Composing the extraction isometries with the local
dilation embeddings proves the claim for the original strategy.
Mapping the orthogonal complements of the original supports isometrically
into $\mathbb C^4\otimes(\mathcal H_{A'}\ominus\mathcal K_A)$ and $\mathbb C^4\otimes(\mathcal H_{B'}\ominus\mathcal K_B)$, for large enough $\mathcal H_{A'}\supseteq\mathcal K_A$ and $\mathcal H_{B'}\supseteq\mathcal K_B$, extends these maps to isometries of the
original local spaces.
Converse attainability follows from Section~\ref{sec:rectangle}. The dimension assertions follow from $\supp\rho_A\simeq\mathbb C^4\otimes\mathcal K_A$, $\supp\rho_B\simeq\mathbb C^4\otimes\mathcal K_B$, the fact that the dilated supports are isometric images of the original ones, and the reference strategy.
\end{proof}

The extracted state also determines the observed probabilities and their
privacy against Eve. Here the row label $a$ denotes the physical output
$\alpha_a$ for either row, and $b$ is Bob's first sign.

\begin{corollary}[Optimal correlations and private randomness]
\label{cor:rectangle-randomness}
Every optimal finite-dimensional rectangle strategy has
\[
p(a,b\mid x,j)=\frac18
\left(1+(-1)^x\sqrt{\frac23}\,b\,\alpha_{aj}\right),
\qquad x\in\{0,1\},\quad j\in\{1,2,3\}.
\]
For either public row input $x$, Alice's row is uniform on its four legal
outputs and independent of Eve. In particular,
\[
\pg(A_{\mathrm{row}}\mid E,x)=\frac14,
\qquad H_{\min}(A_{\mathrm{row}}\mid E,x)=2.
\]
\end{corollary}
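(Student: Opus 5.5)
The plan is to read everything off the rigidity statement of Theorem~\ref{thm:rigidity}. After the local isometries $V_A$ and $V_B$, the state is $\ket{\Phi_4}_{A_0B_0}\otimes\ket\eta_{A'B'E}$. On the supports, Alice's row-$0$ effect for output $\alpha_a$ is $P_a^0\otimes\Id$, her row-$1$ effect is $Q_a^0\otimes\Id$, and Bob's first-sign observable in column $j$ is $B_j^0\otimes\Id$. The original effects leave the supports invariant, and the isometries intertwine them there. Hence all probabilities $p(a,b\mid x,j)$, and all cq states with Eve, can be computed in the reference model on $\mathbb C^4\otimes\mathbb C^4$, with $\eta$ merely along for the ride.

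For the correlation formula, use $\langle A\otimes B\rangle_{\Phi_4}=\Tr(AB^T)/4$. Bob's projector for first sign $b$ is $(\Id+bB_j^0)/2$.
\begin{itemize}
\item For $x=0$, we get $p=\tfrac18\bigl(\Tr P_a^0+b\Tr(P_a^0 (B_j^0)^T)\bigr)/1$, up to the factor conventions. Since $(B_j^0)^T=X_j^0/q$, we need $\bra a X_j^0\ket a=\alpha_{aj}+\bra a C_j^0\ket a$.
\item Compressing the anticommutator relation \eqref{eq:construction} gives $P_aC_jP_a=\alpha_{aj}P_a/3$, as in the proof of Lemma~\ref{lem:conference}. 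So the diagonal entry is $\tfrac43\alpha_{aj}$, and dividing by $q=\sqrt{8/3}$ yields $\sqrt{2/3}\,\alpha_{aj}$.
\item For $x=1$, the physical second-row observables are $-C_j^0$, and $Q_a^0$ is labeled by $\alpha_a$ with $Q_aC_jQ_a=-\alpha_{aj}Q_a$. Then $Q_aD_jQ_a=-\alpha_{aj}Q_a/3$, so the diagonal of $X_j$ is $-\tfrac43\alpha_{aj}$. This produces the factor $(-1)^x$. Here $b$ is Bob's first sign, which is compared to the first row; for the second row the relevant sign is $-b$, consistent with the formula.
\end{itemize}
I will double-check this sign bookkeeping, since it is the only place errors can enter.

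For privacy, consider the post-measurement state of Alice's row and Eve. Measuring $P_a^0\otimes\Id$ on $\Phi_4\otimes\eta$ gives
\[
\rho_{A_{\mathrm{row}}E}=\sum_a\tfrac14\ket a\bra a\otimes\Tr_{A'B'}\ket\eta\bra\eta,
\]
because the $A_0$ marginal of $\Phi_4$ is $\Id/4$ and the $A_0$ factor is decoupled from $E$. The same holds with $Q_a^0$, since $\Tr Q_a^0=1$. The state is therefore a product of a uniform distribution with a fixed $\rho_E$. Any Eve measurement then guesses correctly with probability $1/4$, which gives $\pg=1/4$ and $H_{\min}=-\log_2\pg=2$.

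The one subtle point is that the theorem classifies effects only on the local supports. We justify this as follows: the state $\psi_{ABE}$ lies in $\supp\rho_A\otimes\supp\rho_B\otimes\mathcal H_E$, and the effects preserve these supports. So all probabilities and Eve's conditional states depend only on the restricted effects, and these are exactly the ones fixed by Theorem~\ref{thm:rigidity}. This step, together with the sign bookkeeping above, is where the proof needs care. Once both are settled, the corollary is a direct computation.
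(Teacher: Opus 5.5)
Your proposal is correct and follows essentially the same route as the paper. You apply Theorem~\ref{thm:rigidity}, evaluate $P_a^0$ or $Q_a^0$ tensored with $(\Id+bB_j^0)/2$ on $\Phi_4$ via $\Tr(AB^T)/4$, using the diagonal entries $\pm\frac43\alpha_{aj}$ of $X_j^0$ in the $P$- and $Q$-bases from the compressed anticommutator, and read privacy off the product form $\sum_a\frac14\ket a\bra a\otimes\rho_E$; your $x=1$ sign bookkeeping is right, and the expression you wrote is exactly $\frac18\bigl(1+b\Tr(P_a^0(B_j^0)^T)\bigr)$ with no further factor conventions needed.
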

\begin{proof}
Apply Theorem~\ref{thm:rigidity}. Alice's row effects, restricted to her support, are the
rank-one reference projections tensored with the identity, and the state
is $\Phi_4\otimes\eta$. Hence each subnormalized conditional Eve state
equals $\rho_E/4$, proving uniformity, independence and the guessing claim.
For the joint distribution, put $R_{a|0}\coloneqq P_a^0$ and $R_{a|1}\coloneqq Q_a^0$.
Taking diagonal entries of the anticommutator in \eqref{eq:construction} in the two bases gives
\[
\Tr\bigl(R_{a|x}(B_j^0)^T\bigr)
=(-1)^x\sqrt{\frac23}\,\alpha_{aj}.
\]
Taking the maximally entangled expectation of
$R_{a|x}\otimes(\Id+bB_j^0)/2$ gives the displayed probability.
\end{proof}

This randomness concerns the measured rows of the rectangle. The extra
deterministic row used below to embed the rectangle in the magic square
is a different measurement.

\section{The magic-square threshold for perfect prediction}\label{sec:threshold}
The rectangle value determines when the full magic-square score first
certifies uncertainty about Alice's first-row output against quantum side information.
The relevant restriction is perfect prediction of Alice's key. Conditioning
on a successful prediction makes that row deterministic, and a local
postprocessing of Bob's columns then leaves the rectangle game on the other two rows.
In the full game Alice receives a row $x\in\{0,1,2\}$ and answers three
signs of product $+1$; Bob receives a column $j\in\{1,2,3\}$ and answers
three signs of product $-1$. The inputs are independent and uniform, and
the players win when Alice's $j$th sign equals Bob's sign in row $x$. Let the key $K$
be Alice's entire output row on input $x=0$. Over finite-dimensional
tensor-product strategies with arbitrary POVMs and quantum side
information, let $\beta_0$ be the supremum of the score subject to
$\pg(K\mid E,x=0)=1$.
For a finite-alphabet classical--quantum state this is equivalent to
$H(K\mid E,x=0)=0$: equality in $H(K\mid E,x=0)\geq0$ holds precisely when
the conditional Eve states have orthogonal supports. Eve may be
taken to hold a purification of the finite-dimensional Alice--Bob state, since any other extension arises from it by a channel on Eve's system, which cannot improve her guess. Her reduced state then has finite rank, so an optimal
finite-outcome guessing POVM exists on its support; when
$\pg(K\mid E,x=0)=1$, it guesses the key perfectly.

In each strategy conditioned on Eve's predicted key value, a relabeling of signs that preserves all parity and agreement conditions turns that key into $(+,+,+)$, and Bob's column postprocessing of the proof below then
preserves or increases the total score. After this change, the three input
pairs in Alice's key row contribute \(1/3\) to the score, and the remaining
six form a rectangle whose rows \(0,1\) are the square's rows \(1,2\), contributing at most \((2/3)\omega_Q(2,3)\).

\begin{proposition}[Exact reduction]\label{prop:reduction}
\[
\beta_0=\tfrac13+\tfrac23\omega_Q(2,3).
\]
\end{proposition}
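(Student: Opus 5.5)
We prove the two inequalities separately. The lower bound comes from an explicit construction; the upper bound comes from conditioning on Eve's guess and reducing each branch to the rectangle.

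\emph{Lower bound.} Take the reference strategy of Section~\ref{sec:rectangle} on $\mathbb C^4\otimes\mathbb C^4$ and use it for the square's rows $1$ and $2$, with the sign conventions undone. On row $0$, Alice deterministically outputs $(+,+,+)$. Bob, on column $j$, outputs $+$ in row $0$ and his two rectangle signs in rows $1,2$. These are opposite signs, so his product is $-1$, as required. Alice's row-$0$ output is constant, so $\pg(K\mid E,x=0)=1$ for any Eve. The row-$0$ input pairs contribute $1/3$, and the other six pairs reproduce the rectangle score with weight $2/3$. This gives $\beta_0\ge\frac13+\frac23\omega_Q(2,3)$.

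\emph{Upper bound.} Fix a finite-dimensional strategy with $\pg(K\mid E,x=0)=1$ and let Eve hold a purification. By the discussion preceding the proposition, there is an optimal guessing POVM $\{G_k\}$ on Eve's side that is perfect.
\begin{itemize}
\item Measuring $G$ on Eve's system splits the Alice--Bob state as $\sum_k p_k\rho_{AB}^{(k)}$. Each $\rho^{(k)}$ is a tensor-product strategy with the same local measurements, since Eve's measurement commutes with Alice's and Bob's.
\item Perfect guessing means that under $\rho^{(k)}$, Alice's row-$0$ output equals $k$ with probability one.
\item The total score is the $p_k$-average of the branch scores, so it suffices to bound each branch by $\frac13+\frac23\omega_Q(2,3)$.
\end{itemize}

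\emph{One branch.} Write $k=(k_1,k_2,k_3)$, which has product $+1$.
\begin{itemize}
\item Relabel outputs by flipping Alice's $j$th sign in every row and Bob's entire column $j$ whenever $k_j=-1$. Each Alice row is flipped in an even number of coordinates, so her parities are preserved. Each Bob column is flipped in all three entries, which changes its product by $(-1)^3$. A Bob parity of $-1$ would therefore be violated, but the flipped columns come in pairs.

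This breaks Bob's constraint in each flipped column, so the relabeling must be done differently. Flip Alice's coordinate $j$ in all three of her rows, and flip Bob's column $j$ only in rows $0,1$. This preserves Bob's parity. Agreement in row $2$ then changes, but flipping Alice's coordinate $j$ in row $2$ as well is an even flip overall, since the set of flipped $j$ has even size. So all parities and agreements are preserved, and the key becomes $(+,+,+)$.
\item Postprocess Bob: in each column, replace his row-$0$ sign by $+$ and keep the row-$1$ sign $b_1$, setting the row-$2$ sign to $-b_1$ to keep product $-1$. Row $0$ is now always won, contributing $1/3$, whereas before it contributed at most $1/3$.
\item The row-$2$ change may lose agreement. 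To avoid this, use Bob's original output instead: when his original row-$0$ sign was $+$, his rows $1,2$ already had opposite signs and nothing changes. When it was $-$, his rows $1,2$ had equal signs; then keep the row-$1$ sign and flip the row-$2$ sign. This loses at most what was previously lost at row $0$.
\end{itemize}

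\emph{Main obstacle.} The remaining step is to verify that this postprocessing does not decrease the total score. Rows $1$ and $2$ of the resulting strategy form a legal $2\times3$ rectangle strategy, worth at most $\omega_Q(2,3)$ on those six input pairs. The careful accounting is this:
\begin{itemize}
\item Whenever Bob's original row-$0$ sign was $-$, that row-$0$ round was lost, since Alice's row-$0$ output is deterministically $+$.
\item Flipping the row-$2$ sign changes only the row-$2$ pair. That pair is a different round, with a different Alice row.
\end{itemize}
So the comparison is round-by-round in expectation. I would argue it as a per-column inequality between win indicators, averaged over inputs:
\[
\Pr[\text{win at }(0,j)]+\Pr[\text{win at }(2,j)]\le 1+\Pr[\text{new win at }(2,j)].
\]
This holds because the new row-$2$ answer differs from the old one only on the event where the row-$0$ answer was wrong. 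If this turns out to require more care, I would fall back on the paper's observation that the postprocessing "preserves or increases the total score" and check it through that per-column inequality. Averaging over branches then gives $\beta_0\le\frac13+\frac23\omega_Q(2,3)$.
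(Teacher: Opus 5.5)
Your overall plan is the paper's: condition on Eve's perfect guess, relabel each branch so the key is $(+,+,+)$, postprocess Bob's columns whose top sign is negative, and bound the remaining six input pairs by the rectangle. Your lower bound is also fine. The step that fails as written is the relabeling you finally adopt. Flipping Alice's coordinate $j$ in \emph{all three} rows while flipping Bob's column-$j$ entries only in rows $0,1$ negates the agreement predicate at every input pair $(2,j)$ with $k_j=-1$: Alice's row-$2$ sign changes and Bob's row-$2$ sign does not. Your remark that the extra row-$2$ flip is ``an even flip overall'' protects Alice's parity, not agreement. So the relabeled strategy no longer has the branch's score, and the rectangle bound is applied to a different game. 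The correct relabeling, which is the paper's, multiplies Alice's coordinate $j$ by $k_j$ only in rows $0$ and $1$, and Bob's entries by $k_j$ only in rows $0$ and $1$ of column $j$. Alice's rows $0,1$ then receive an even number of flips because $k$ is even. Each Bob column receives zero or two flips, row $2$ is untouched on both sides, and all nine agreement predicates are preserved.

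With this repair the argument goes through, and it differs from the paper only in the postprocessing. You send Bob's outcome $(-,b,b)$ deterministically to $(+,b,-b)$, whereas the paper replaces it by a uniform choice of $(+,+,-)$ and $(+,-,+)$. Your per-column inequality is true, so no fallback is needed, but its justification should invoke no-signalling explicitly. The Bob event $E=\{c_0=-1\}$ has the same probability in rounds $(0,j)$ and $(2,j)$. In round $(0,j)$ it is exactly the losing event, because Alice's row-$0$ sign is deterministically $+$. The slack in your inequality is then $2\Pr[\text{old loss at }(2,j)\text{ and }E]\geq0$.

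Let $N_b$ be the effect of $(-,b,b)$ and $L,R$ Alice's row-$1$ and row-$2$ coordinate observables in that column. In the paper's correlation form your increase is $\langle 2(\Id-bR)\otimes N_b\rangle\geq0$, while the paper's is $\langle[2\Id-b(L+R)]\otimes N_b\rangle\geq0$. For the proposition either suffices. The paper's symmetric choice pays off in Theorem~\ref{thm:threshold-classification}: there $(L+R)^2=\frac43\Id$ on the branch support, which makes $2\Id-b(L+R)$ strictly positive and forces $N_b$ to vanish at equality. By contrast, $\Id-bR$ has a kernel when $R$ is a reflection.
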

\begin{proof}
Measure Eve's perfect guessing POVM. Conditional on each positive-probability
outcome, Alice's row 0 is one deterministic even string $k$.
The unconditioned Alice--Bob behavior is unchanged, so the score is an average
of such sector scores. Multiply Alice's entries in rows 0 and 1 in column
$j$ by $k_j$, and multiply Bob's matching two entries by $k_j$.
This preserves every parity and agreement predicate and makes row 0 equal
to $(+,+,+)$.

Fix one column. Let $L,R$ be Alice's coordinate observables, which are contractions, at this column's intersections with rows 1 and 2.
A Bob outcome with negative top sign is $(-,b,b)$, with POVM effect $N_b$.
Its contribution to the sum of the three intersection correlations of this column is
the expectation of $(-\Id+bL+bR)\otimes N_b$ in the conditional state.
Replace this outcome by a uniform choice of $(+,+,-)$ and $(+,-,+)$.
The increase is the expectation of
\[
[2\Id-b(L+R)]\otimes N_b\geq0.
\]
This is Bob-local stochastic postprocessing. All top entries are now positive,
so the top row always wins and the bottom rows form a rectangle.
Thus each sector's score is at most $1/3+(2/3)\omega_Q(2,3)$.
Conversely, any rectangle strategy embeds by adjoining a deterministic
positive top row and Bob's positive top entries, with Eve trivial.
Taking suprema proves the equality.
\end{proof}

\begin{corollary}[Sharp zero-entropy threshold]\label{cor:threshold}
For finite-dimensional tensor-product strategies with arbitrary POVMs and
quantum side information,
\begin{equation}\label{eq:threshold}
\beta_0=\frac{2+\sqrt{2/3}}3=0.9388321936425753\ldots.
\end{equation}
The value is attained with trivial Eve,
Alice rows $(\Id,\Id,\Id)$, $(D_1^0,D_2^0,D_3^0)$, $(-C_1^0,-C_2^0,-C_3^0)$,
and Bob columns $(\Id,B_j^0,-B_j^0)$ on $\Phi_4$.
\end{corollary}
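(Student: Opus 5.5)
The plan is to combine Proposition~\ref{prop:reduction} with the value $\omega_Q(2,3)=(1+\sqrt{2/3})/2$ from Theorem~\ref{thm:rigidity}. The formula then follows by arithmetic:
\[
\tfrac13+\tfrac23\cdot\tfrac{1+\sqrt{2/3}}2=\tfrac{2+\sqrt{2/3}}3 ,
\]
and the decimal expansion is a direct numerical evaluation. Since the proposition already yields both inequalities for the supremum, the only remaining content of the corollary is the explicit attaining strategy. I would verify that directly rather than appeal to the abstract embedding in the proof of the proposition.

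For the attainment, take trivial Eve and the state $\Phi_4$.

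\emph{Legality.} Alice's row $0$ consists of $(\Id,\Id,\Id)$, with product $+\Id$. Her row $1$ is $(D_j^0)$, with product $D_1^0D_2^0D_3^0=\Id$. Her row $2$ is $(-C_j^0)$, with product $-C_1^0C_2^0C_3^0=\Id$; this uses $C_1^0C_2^0C_3^0=-U_0D_1^0D_2^0D_3^0U_0^*=-\Id$. Each row is a commuting triple of reflections, so it defines a legal four-outcome measurement. Bob's column $j$ is $(\Id,B_j^0,-B_j^0)$. These entries commute, are reflections, and have product $-(B_j^0)^2=-\Id$, so each column is a legal two-outcome measurement with odd parity.

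\emph{Perfect prediction.} Row $0$ is deterministic with output $(+,+,+)$, so $\pg(K\mid E,x=0)=1$ holds trivially.

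\emph{Score.} Row $0$ wins with certainty on all three columns. The six remaining input pairs are exactly the reference rectangle strategy of Section~\ref{sec:rectangle}: Alice's physical second-row observables are $-C_j^0$, and Bob's actual second sign in that rectangle is $-B_j^0$. Their average winning probability is therefore $1/2+\langle S\rangle_{\Phi_4}/12=1/2+2r/12=(1+\sqrt{2/3})/2$. The total score is $\tfrac13\cdot1+\tfrac23\omega_Q(2,3)$, which is the threshold.

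I do not expect a real obstacle. The one point to check carefully is the sign bookkeeping: the square's rows $1,2$ must map onto the rectangle's rows $0,1$, and Bob's entries in rows $1$ and $2$ must match $B_j^0$ and its negation as in the convention of \eqref{eq:notation}.
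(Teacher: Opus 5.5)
Your proposal is correct and follows essentially the paper's route: the value comes from Proposition~\ref{prop:reduction} combined with $\omega_Q(2,3)=(1+\sqrt{2/3})/2$ from Theorem~\ref{thm:rigidity}. The attaining strategy is the converse embedding in that proposition's proof, which adjoins a deterministic positive top row and Bob's positive top entries to the reference rectangle with trivial Eve, and your explicit check of parities, perfect prediction and score, including the $-C_j^0$, $-B_j^0$ sign conventions, just spells this out.
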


The threshold concerns the score and Alice's raw-key entropy, with Eve's
system as side information. Only Alice's key is required to be perfectly predictable; Bob's column may remain random.
Applications to a key-distribution protocol additionally require its
reconciliation and acceptance conditions.

Equality at the threshold also determines the original strategies whose
key is perfectly predictable. The reduction above first changes Bob's
columns. To recover the original strategy, we show that equality forces
this change to vanish on each branch associated with a key value, and
then assemble the rectangle extractions by separate local isometries.
A single assembled pair of local isometries acts on the whole original
purification, preserving coherence between the key branches.

Write \(\mathcal A\coloneqq\{\alpha_1,\alpha_2,\alpha_3,\alpha_4\}\) for the
even sign strings in \eqref{eq:alpha}. For \(s=\alpha_a\), abbreviate
\(P_s^0\coloneqq P_a^0\) and \(Q_s^0\coloneqq Q_a^0\), and write \(ks\) for
coordinatewise multiplication of sign strings. Let \(A_{s|x}\) be
Alice's original row effects, with \(x=0,1,2\) and \(s\in\mathcal A\).
Let \(G_{c|j}\) be Bob's original column effects, where
\(c=(c_0,c_1,c_2)\in\{\pm1\}^3\) has product \(-1\).

\begin{theorem}[Classification at the threshold]
\label{thm:threshold-classification}
Consider a finite-dimensional tensor-product strategy for the uniform
magic-square game, with arbitrary POVMs and a purification
\(\psi_{ABE}\). Suppose its score \(w\) and key satisfy
\[
w=\beta_0=\frac{2+\sqrt{2/3}}3,
\qquad \pg(K\mid E,x=0)=1.
\]
There is a nonempty set \(\mathcal I\subseteq\mathcal A\) of
positive-probability key values, orthogonal decompositions of the
occupied local supports
\[
\supp\rho_A=\bigoplus_{k\in\mathcal I}\mathcal H_{A,k},
\qquad
\supp\rho_B=\bigoplus_{k\in\mathcal I}\mathcal H_{B,k},
\]
and local support identifications
\[
W_{A,k}:\mathcal H_{A,k}\longrightarrow
\mathbb C^4\otimes\mathcal K_{A,k},\qquad
W_{B,k}:\mathcal H_{B,k}\longrightarrow
\mathbb C^4\otimes\mathcal K_{B,k}.
\]
All original measurement effects preserve these subspaces.
Under these identifications their restrictions are
\begin{align*}
A_{s|0}|_k&=\delta_{s,k}\Id,&
A_{s|1}|_k&=P_{ks}^0\otimes\Id,&
A_{s|2}|_k&=Q_s^0\otimes\Id,\\
G_{c|j}|_k&=
\begin{cases}
\displaystyle\frac{\Id_4+k_jc_1 B_j^0}{2}\otimes\Id,
&c_0=k_j,\\[2pt]
0,&c_0=-k_j.
\end{cases}
\end{align*}
In the nonzero case the legal column parity already implies
\(c_2=-k_jc_1\).

Put \(\mathcal K_A\coloneqq\bigoplus_k\mathcal K_{A,k}\) and
\(\mathcal K_B\coloneqq\bigoplus_k\mathcal K_{B,k}\).
After adjoining unused ancillary dimensions if necessary, the direct sums
of the support identifications, followed by the natural identifications
with \(\mathbb C^4\otimes\mathcal K_A\) and \(\mathbb C^4\otimes\mathcal K_B\),
extend to Alice-local and Bob-local isometries \(V_A,V_B\) of the original spaces into \(\mathbb C^4_{A_0}\otimes\mathcal H_{A'}\) and \(\mathbb C^4_{B_0}\otimes\mathcal H_{B'}\), where \(\mathcal H_{A'}\supseteq\mathcal K_A\) and \(\mathcal H_{B'}\supseteq\mathcal K_B\). They satisfy
\[
(V_A\otimes V_B\otimes\Id_E)\psi
=\ket{\Phi_4}_{A_0B_0}\otimes\ket\eta_{A'B'E}.
\]
Moreover, there are pairwise orthogonal subspaces
\(\mathcal E_k\subseteq\mathcal H_E\) and vectors
\(\eta_k\in\mathcal K_{A,k}\otimes\mathcal K_{B,k}
\otimes\mathcal E_k\) such that
\[
\eta=\sum_{k\in\mathcal I}\sqrt{p_k}\,\eta_k,
\qquad \|\eta_k\|=1,\quad p_k>0,\quad \sum_kp_k=1.
\]
Here \(p_k\) is the probability of the key value \(k\).
Thus the extraction acts on Alice and Bob alone and preserves the
coherence of the original purification. Each subspace
\(\mathcal H_{A,k}\) or \(\mathcal H_{B,k}\), and hence each total
occupied local support, has dimension divisible by four.

Conversely, the displayed state and measurement restrictions, including
the orthogonality of the Eve subspaces, define a strategy with
\(w=\beta_0\) and perfect Eve prediction. Measurements on unused local
dimensions are unrestricted subject to being POVMs.
\end{theorem}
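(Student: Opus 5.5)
We split the purified strategy into key branches using Eve's perfect guessing measurement, apply the rectangle rigidity theorem branch by branch, and then reassemble the branches coherently. First dilate all POVMs locally so that every effect is a projector; at the end the dilation is undone exactly as in the proof of Theorem~\ref{thm:rigidity}. Since $\pg(K\mid E,x=0)=1$, the subnormalized Eve states $\rho_E^k=\Tr_{AB}[(A_{k|0}\otimes\Id\otimes\Id)\psi\psi^*]$ have pairwise orthogonal supports $\mathcal E_k$. Let $\Pi_k$ be the projector onto $\mathcal E_k$ and let $\mathcal I$ be the set of $k$ with $p_k>0$. Perfect prediction gives $(A_{k|0}\otimes\Id\otimes\Id)\psi=(\Id\otimes\Id\otimes\Pi_k)\psi$. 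Put $\psi_k\coloneqq(\Id\otimes\Id\otimes\Pi_k)\psi/\sqrt{p_k}$, so that $\psi=\sum_k\sqrt{p_k}\psi_k$. The Alice--Bob behavior is the $p_k$-mixture of the behaviors of the $\psi_k$, since Eve's projector commutes with everything on $AB$.

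By Proposition~\ref{prop:reduction}, each branch scores at most $\beta_0$, and $w=\beta_0$ is their average. Hence every branch with $p_k>0$ attains $\beta_0$, and both inequalities in that proof are equalities on it. Equality in Bob's postprocessing step means $\langle[2\Id-c_1(L+R)]\otimes N_{c_1}\rangle_{\psi_k}=0$ for each column and each outcome with wrong top sign. Here $2\Id-c_1(L+R)\ge0$ and $N\ge0$, so the product is positive and annihilates $\psi_k$. After the sign relabeling by $k$, the top row is correct with certainty on the branch. Top-row winning with probability one therefore forces $\langle \Id\otimes G_{c|j}\rangle_{\psi_k}=0$ whenever $c_0=-k_j$. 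The reduced $\psi_k$-strategy on rows $1,2$ is then an optimal rectangle strategy.

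Apply Theorem~\ref{thm:rigidity} to $\psi_k$, after undoing the relabeling by $k$. Alice's row-1 effects become $P^0_{ks}$ because her entries were multiplied by $k$, and row 2 gives $Q^0_s$. Bob's effects $G_{c|j}$ with $c_0=k_j$ restrict to $(\Id+k_jc_1B_j^0)/2$. Take $\mathcal H_{A,k}\coloneqq\supp\rho_A^{(k)}$ and $\mathcal H_{B,k}\coloneqq\supp\rho_B^{(k)}$, with $W_{A,k}$ and $W_{B,k}$ the extraction maps on these supports; the theorem gives invariance of $\mathcal H_{A,k}$ under all of Alice's effects. To show $A_{s|0}|_k=\delta_{s,k}\Id$, note that $A_{k|0}\psi_k=\psi_k$ with $A_{k|0}$ a projector, so $A_{k|0}$ acts as the identity on $\supp\rho_A^{(k)}$. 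These supports are therefore pairwise orthogonal across $k$, and $\bigoplus_k\mathcal H_{A,k}=\supp\rho_A$.

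\textbf{The main obstacle} is Bob's side: his supports must be mutually orthogonal and invariant, even though he has no direct key observable. Here we use the transfer relations. On branch $k$, Alice's projector $A_{k|0}$ satisfies $A_{k|0}\psi=\Theta_k(\cdot)\psi$ for a Bob-local operator built from the $B_j$ of that branch. More concretely, $\sqrt{p_k}\,\psi_k=(A_{k|0}\otimes\Id)\psi$, and equality on each branch makes $\psi_k$ an eigenvector of the Bob projectors $\sum_{c:\,c_0=k_j}G_{c|j}$ for all $j$. For $k\neq k'$, some column has $k_j\neq k'_j$, and those two Bob projectors are orthogonal, which separates $\supp\rho_B^{(k)}$ from $\supp\rho_B^{(k')}$. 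Invariance of each $\mathcal H_{B,k}$ under the $G_{c|j}$ then follows from the rectangle theorem applied branchwise.

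With the decompositions established, take the direct sums of the $W_{A,k}$ and of the $W_{B,k}$. On branch $k$ each extraction gives $\Phi_4\otimes\eta_k$ with $\eta_k\in\mathcal K_{A,k}\otimes\mathcal K_{B,k}\otimes\mathcal E_k$, which yields the coherent sum $\eta=\sum_k\sqrt{p_k}\,\eta_k$. Extend the direct sums to isometries of the full original spaces with ancillas, as before. Divisibility by four holds branchwise. For the converse, compute the score directly: each branch gives $1/3$ from the deterministic top row plus $2/3$ of the rectangle value. Orthogonality of the $\mathcal E_k$ gives Eve perfect prediction.
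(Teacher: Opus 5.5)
Your overall architecture matches the paper's: split by Eve's orthogonal supports $\mathcal E_k$, use the averaging argument to force every branch to score $\beta_0$, apply rectangle rigidity branchwise, separate Bob's branch supports with the top-sign marginals, and assemble coherently. However, there is a genuine gap where you conclude that Bob never outputs the wrong top sign on a branch.

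From $\Delta_k=0$ you correctly get that the positive operator $[2\Id-b(L_j+R_j)]\otimes N_b^j$ annihilates $\psi_k$. This does not give $\langle\Id\otimes N_b^j\rangle_{\psi_k}=0$. That needs $2\Id-b(L_j+R_j)$ to be invertible on Alice's branch support, and for general contractions it need not be. Suppose $L_j=R_j=b\Id$ on part of the support. Then an outcome $(-,b,b)$ has column correlation sum $-1+1+1=1$, exactly that of its uniform replacement. The increase is zero while $N_b^j\neq0$. So your sentence ``the top row is correct with certainty on the branch'' is unproved: it holds for the \emph{postprocessed} strategy, not the original one.

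Everything downstream depends on this step:
\begin{itemize}
\item the claim that the original Bob effects on rows $1,2$ form an optimal rectangle (without the vanishing they are not even a rectangle POVM, since $F_+^j+F_-^j=\Id-N_j$);
\item the identification of $G_{c|j}$ for $c_0=k_j$ and the vanishing for $c_0=-k_j$;
\item the eigenvector property of $\sum_{c:c_0=k_j}G_{c|j}$ that you use to separate Bob's supports.
\end{itemize}

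The missing idea is to reverse the order, as the paper does. First apply Theorem~\ref{thm:rigidity} to the postprocessed rectangle, whose Bob effects $H_b^j=F_b^j+\tfrac12N_j$ form honest binary POVMs. This rectangle is optimal because the postprocessed score is $\beta_0$ and its top row always wins. Bob's postprocessing does not touch Alice, so rigidity gives $L_j=D_j^0\otimes\Id$ and $R_j=-C_j^0\otimes\Id$ on $\mathcal H_{A,k}$. Hence $(L_j+R_j)^2=\tfrac43\Id$ there by \eqref{eq:construction}, and $2\Id-b(L_j+R_j)\geq(2-2/\sqrt3)\Id$ on that invariant subspace. Since $\psi_k$ has Alice support $\mathcal H_{A,k}$,
\[
0=\Delta_k\geq\tfrac1{18}\bigl(2-2/\sqrt3\bigr)\sum_{j,b}\Tr(\rho_{B,k}N_b^j).
\]
So each $N_b^j$ vanishes on $\mathcal H_{B,k}$, and $F_b^j=H_b^j$ there as maps. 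From this point your argument goes through: the branchwise identification of all $G_{c|j}$, the top-sign marginal being $\Id$ on $\mathcal H_{B,k}$ and $0$ on $\mathcal H_{B,l}$, and the coherent direct-sum assembly. The initial global dilation to projectors is harmless but unnecessary, since Theorem~\ref{thm:rigidity} already handles POVMs.
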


\begin{proof}
First resolve the key on Eve's original system. The subnormalized
conditional Eve states are
\[
\tau_k\coloneqq\Tr_{AB}\bigl[
(\sqrt{A_{k|0}}\otimes\Id_{BE})\ket\psi\bra\psi
(\sqrt{A_{k|0}}\otimes\Id_{BE})\bigr].
\]
They sum to \(\rho_E\). This state has finite rank because Alice and
Bob are finite-dimensional, so an optimal finite-outcome guessing POVM
exists on its support. If \(\{M_k\}\) is a perfect guessing POVM, positivity
gives \(M_k=\Id\) on \(\supp\tau_k\) and \(M_k=0\) on
\(\supp\tau_l\) for \(l\ne k\). The subspaces
\(\mathcal E_k\coloneqq\supp\tau_k\) are therefore pairwise orthogonal.
Let \(\Pi_k\) be their orthogonal projections. Their sum is the identity
on \(\supp\rho_E\).

Put \(p_k\coloneqq\Tr\tau_k\), omit the values with \(p_k=0\), and define normalized vectors
\[
\psi_k\coloneqq p_k^{-1/2}(\Id_{AB}\otimes\Pi_k)\psi.
\]
For \(s\ne k\),
\[
\| (\sqrt{A_{s|0}}\otimes\Pi_k)\psi\|^2
=\Tr(\Pi_k\tau_s)=0.
\]
Consequently \(A_{s|0}\psi_k=0\) for \(s\ne k\), and
\(A_{k|0}\psi_k=\psi_k\). As in the proof of Lemma~\ref{lem:support}, an operator on one party
that annihilates a pure state vanishes on the support of that party's reduced state. Thus on
\(\mathcal H_{A,k}\coloneqq\supp\Tr_{BE}\ket{\psi_k}\bra{\psi_k}\),
the original key POVM is deterministically \(k\). These Alice supports
are pairwise orthogonal: \(A_{k|0}\) has eigenvalue one on
\(\mathcal H_{A,k}\) and zero on \(\mathcal H_{A,l}\), \(l\ne k\).

The decomposition
\(\psi=\sum_k\sqrt{p_k}\psi_k\) is an identity of vectors; no
measurement has been performed. Every game effect acts trivially on Eve,
so orthogonality of the \(\mathcal E_k\) gives
\(w=\sum_kp_kw_k\), where \(w_k\) is the score of the branch \(\psi_k\). Proposition~\ref{prop:reduction} gives
\(w_k\leq\beta_0\). Since \(p_k>0\) and the average equals
\(\beta_0\), every \(w_k\) equals \(\beta_0\).

Fix one such \(k\). We repeat the relabeling and column postprocessing of the
proof of Proposition~\ref{prop:reduction}, now with explicit effects.
For this proof only, multiply coordinates in Alice's
rows 0 and 1 by \(k\), and Bob's entries in rows 0 and 1 of column \(j\)
by \(k_j\). This preserves all parities and agreement predicates, and
makes the key \((+,+,+)\). Let \(L_j,R_j\) be the \(j\)th coordinate
contractions of Alice's relabeled rows 1 and 2. Bob's four legal
outcomes in column \(j\) have the effects
\[
F_b^j\coloneqq G_{(k_j,k_jb,-b)|j},\qquad
N_b^j\coloneqq G_{(-k_j,k_jb,b)|j},\qquad b=\pm1.
\]
Put \(N_j\coloneqq N_+^j+N_-^j\). Replacing each outcome with effect
\(N_b^j\) by a uniform choice of the outcomes with effects \(F_\pm^j\)
gives binary effects \(H_b^j\coloneqq F_b^j+\tfrac12N_j\), with
\(H_+^j+H_-^j=\Id\). The full score is \(1/2\) plus
\(1/18\) times the sum of the nine intersection correlations, and this replacement increases it by
\[
\Delta_k\coloneqq\frac1{18}\sum_{j,b}
\bigl\langle[2\Id-b(L_j+R_j)]\otimes N_b^j\bigr\rangle_{\psi_k}\geq0,
\]
since \(2\Id-b(L_j+R_j)\geq0\).
The postprocessed strategy has a top row that always wins and a rectangle on
its last two rows, so its score is at most
\(1/3+(2/3)\omega_Q(2,3)=\beta_0=w_k\). Hence \(\Delta_k=0\) and the
rectangle is optimal.

Apply Theorem~\ref{thm:rigidity} to this rectangle with its original
branch purification \(\psi_k\), keeping Eve as a separate system. It
supplies \(W_{A,k},W_{B,k}\) and the state \(\Phi_4\otimes\eta_k\); write
\(\rho_{B,k}\coloneqq\Tr_{AE}\ket{\psi_k}\bra{\psi_k}\) and
\(\mathcal H_{B,k}\coloneqq\supp\rho_{B,k}\). On the invariant subspace
\(\mathcal H_{A,k}\), \(L_j=D_j^0\otimes\Id\) and \(R_j=-C_j^0\otimes\Id\),
so \((L_j+R_j)^2=\frac43\Id\) there by \eqref{eq:construction}. As
\(\psi_k\) has Alice support \(\mathcal H_{A,k}\),
\[
\bigl\langle[2\Id-b(L_j+R_j)]\otimes N_b^j\bigr\rangle_{\psi_k}
\geq\Bigl(2-\frac2{\sqrt3}\Bigr)\Tr(\rho_{B,k}N_b^j)\geq0.
\]
Since \(\Delta_k=0\), every \(\Tr(\rho_{B,k}N_b^j)\) vanishes; thus
\((N_b^j)^{1/2}\rho_{B,k}^{1/2}=0\), and \(N_b^j\) and \(N_j\) vanish on
\(\mathcal H_{B,k}\). Hence \(F_b^j=H_b^j\) on \(\mathcal H_{B,k}\) as
maps, not merely in expectation. By Theorem~\ref{thm:rigidity}, Alice's effects in rows
1 and 2 preserve \(\mathcal H_{A,k}\), the \(H_b^j\) preserve
\(\mathcal H_{B,k}\), and all are the reference effects there. So all
original column effects preserve this support, the \(F_b^j\) agree there
with the reference rectangle effects, and the \(N_b^j\) vanish there.
Undoing the relabeling yields exactly the displayed row and column
formulas.

For distinct keys \(k,l\), choose a coordinate \(j\) with
\(k_j\ne l_j\). Bob's original top-sign marginal
\[
T_{k_j|j}\coloneqq\sum_{c:c_0=k_j}G_{c|j}
\]
is the identity on \(\mathcal H_{B,k}\) and zero on
\(\mathcal H_{B,l}\), by the preceding paragraph applied separately
to both branches. Its eigenspaces for one and zero are orthogonal,
proving \(\mathcal H_{B,k}\perp\mathcal H_{B,l}\).

Tracing Eve kills the cross terms between different \(\psi_k\).
Consequently each total local reduced state is the positive weighted sum
of its branch reduced states, and its support is exactly the corresponding
direct sum stated in the theorem. All row and column effects preserve
each summand. The direct sums of \(W_{A,k}\) and \(W_{B,k}\) are
therefore locally implementable isometries on these supports. They send
the original coherent vector to
\[
\sum_k\sqrt{p_k}\,\Phi_4\otimes\eta_k
=\Phi_4\otimes\sum_k\sqrt{p_k}\eta_k.
\]
Here the natural embeddings of the ancillary summands are understood.
Because the local maps leave Eve
unchanged, \(\eta_k\) is supported in \(\mathcal E_k\).
Orthogonality and normalization of the branches show that the ancillary
sum is normalized. Enlarge the ancillary spaces if necessary and map
unused local orthogonal complements into additional target dimensions
to extend the support identifications to isometries of the original local spaces.
All assertions on occupied supports remain unchanged.

Finally, in the asserted normal form Alice's key is \(k\) on the local
summands indexed by \(k\), and Eve recovers it by measuring the mutually
orthogonal \(\mathcal E_k\). Undoing the same parity-preserving signs
in each corresponding branch leaves a deterministic winning top row and
the optimal reference rectangle, so every branch and their average have
score \(\beta_0\).
The dimension assertions follow from
\(\mathcal H_{A,k}\simeq\mathbb C^4\otimes\mathcal K_{A,k}\)
and its Bob counterpart.
\end{proof}

A coherent direct sum of local isometries also appears in the extraction
argument of \cite[Corollary~3.6, arXiv v2]{MNP}. Here the equality conditions
for the column postprocessing establish the required orthogonal decompositions
of the local supports.

\section{Guessing probability above the threshold}\label{sec:guessing}
Perfect predictability is lost as soon as the score exceeds $\beta_0$.
To quantify this loss, replace Alice's key row by a fixed even string.
Any other even string differs from it in exactly two coordinates, so the
score loss is at most $2/9$ times the probability that the replacement
changes the output. Averaging this comparison with Eve's guessing POVM
gives a linear bound for general quantum side information.

\begin{theorem}[Linear guessing bound]\label{thm:linear}
Let a finite-dimensional magic-square strategy with arbitrary local POVMs
and quantum side information have score $w$. For Alice's complete first
row $K$,
\begin{equation}\label{eq:linear}
w\leq\beta_0+\frac29\bigl(1-\pg(K\mid E,x=0)\bigr).
\end{equation}
In particular, if $\beta_0\leq w\leq1$, then
\begin{align}
\pg(K\mid E,x=0)&\leq1-\frac92(w-\beta_0),\label{eq:pguess}\\
H_{\min}(K\mid E,x=0)&\geq
-\log_2\!\left(1-\frac92(w-\beta_0)\right).\label{eq:minentropy}
\end{align}
Here $H_{\min}=-\log_2\pg$ is the unsmoothed conditional min-entropy
of the one-round classical key \cite[Theorem~1]{KRS}.
\end{theorem}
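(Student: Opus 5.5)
We would follow the heuristic sketched just before the statement: use Eve's optimal guessing POVM to split the strategy into branches, and in each branch compare with a modified strategy whose key row is deterministic.

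\textbf{Branching on Eve's guess.} Take Eve to hold a purification and fix an optimal guessing POVM $\{M_k\}_{k\in\mathcal A}$, which exists as explained before Proposition~\ref{prop:reduction}. Measuring it does not change the Alice--Bob behaviour. Hence the score is
\[
w=\sum_k \mu_k w_k,
\]
where $\mu_k$ is the probability of guess $k$ and $w_k$ is the score of the Alice--Bob strategy conditioned on that guess. This conditional state is a genuine finite-dimensional tensor-product state with the same local POVMs. Let $\epsilon_k$ be the probability, in branch $k$, that Alice's row-$0$ output differs from $k$. Then
\[
\sum_k\mu_k(1-\epsilon_k)=\pg(K\mid E,x=0).
\]

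\textbf{Modifying the key row in each branch.} In branch $k$, replace Alice's row-$0$ POVM by the deterministic output $k$, leaving every other effect unchanged. Consider an input pair $(0,j)$. The win indicator changes only on events where Alice's original output $s\neq k$ and $s_j\neq k_j$. Two distinct even strings differ in exactly two coordinates, so summing over $j$ gives at most two changed win indicators per such event. Each indicator carries weight $1/9$. Hence
\[
w_k\leq w_k'+\tfrac29\epsilon_k,
\]
where $w_k'$ is the score of the modified strategy.

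\textbf{Bounding the modified score.} The modified strategy has a key row that is perfectly predictable with trivial Eve. Therefore $w_k'\leq\beta_0$ by the definition of $\beta_0$ (equivalently, by Proposition~\ref{prop:reduction}).

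\textbf{Averaging.} Averaging the branch inequalities gives
\[
w\leq\beta_0+\tfrac29\sum_k\mu_k\epsilon_k=\beta_0+\tfrac29\bigl(1-\pg\bigr),
\]
which is \eqref{eq:linear}. Rearranging gives \eqref{eq:pguess}. For $w\leq 1$, the right side of \eqref{eq:pguess} is positive, because $\tfrac92(1-\beta_0)<1$. Taking $-\log_2$, which is monotone, then yields \eqref{eq:minentropy}.

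\textbf{Main obstacle.} The only delicate point is making the branch decomposition rigorous with general quantum side information. We must check that the conditional Alice--Bob state for each guess is admissible in the supremum defining $\beta_0$. We must also check that the sum of the $\mu_k(1-\epsilon_k)$ is exactly the guessing probability of the chosen POVM, and that branches with $\mu_k=0$ may be dropped. Since the guessing POVM acts only on Eve and is finite-outcome, both facts follow directly from $\Tr_E[(\Id\otimes M_k)\ket\psi\bra\psi]$ being a positive subnormalized Alice--Bob state.
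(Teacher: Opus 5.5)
Your proof is correct and is essentially the paper's argument. You replace the key row by the deterministic string $k$, use that distinct even strings differ in exactly two coordinates to bound the loss by $\frac29$ times the mismatch probability, apply the definition of $\beta_0$, and average over Eve's guessing POVM. The paper writes the branch comparison as the operator inequality $\mathcal W-\mathcal W^{(k)}\leq\frac29(\Id-A_{k|0})\otimes\Id$ tensored with $M_k$, which amounts to evaluating on your conditional states. It applies this to an arbitrary guessing POVM and then takes the supremum, so it needs neither the reduction to a purifying Eve nor the existence of an optimal POVM.
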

\begin{proof}
Write Alice's row effects as $A_{s|x}$ and Bob's column effects as
$G_{c|j}$, where $s$ ranges over even and $c$ over odd strings. The
winning effect is
\[
\mathcal W\coloneqq\frac19\sum_{x,j,s,c}
\mathbf1[s_j=c_x]A_{s|x}\otimes G_{c|j},
\qquad w=\Tr(\mathcal W\rho_{AB}).
\]
For a fixed even string $k$, let $\mathcal W^{(k)}$ be the winning effect
after replacing Alice's row-$0$ measurement by the deterministic output
$k$, keeping the state and all other measurements. When
$\sum_sA_{s|0}=\Id$ is inserted into its row-$0$ terms, the coefficient of
$A_{s|0}\otimes G_{c|j}$ in $\mathcal W-\mathcal W^{(k)}$ is
$\frac19(\mathbf1[s_j=c_0]-\mathbf1[k_j=c_0])\leq\frac19\mathbf1[s_j\ne k_j]$.
Positivity of the effects and completeness of Bob's measurements give
\begin{equation}\label{eq:replacement}
\mathcal W-\mathcal W^{(k)}
\leq\frac19\sum_s d_H(s,k)A_{s|0}\otimes\Id
=\frac29(\Id-A_{k|0})\otimes\Id,
\end{equation}
with the Hamming distance $d_H$, because distinct even strings of
length three differ in two coordinates.
The replaced key is deterministic for every bipartite state, so the
definition of $\beta_0$ gives $\mathcal W^{(k)}\leq\beta_0\Id$.

For any Eve guessing POVM $\{M_k\}$, combine the two bounds, tensor with
$M_k\geq0$ and sum over $k$. Since $\sum_kM_k=\Id_E$, this yields
\[
\mathcal W\otimes\Id_E
\leq\beta_0\Id+
\frac29\Bigl(\Id-\sum_kA_{k|0}\otimes\Id_B\otimes M_k\Bigr).
\]
Taking the expectation in the original state gives
$w\leq\beta_0+(2/9)(1-p_M)$, where
$p_M\coloneqq\sum_k\Tr[(A_{k|0}\otimes\Id_B\otimes M_k)\rho_{ABE}]$ is the
success probability of this guess.
Taking the supremum of $p_M$ proves \eqref{eq:linear} without
requiring an optimal guessing measurement. Rearranging and applying
$-\log_2$ proves \eqref{eq:pguess} and \eqref{eq:minentropy}.
The logarithm's argument is positive on the stated interval: at $w=1$
it equals $(\sqrt6-1)/2$.
\end{proof}

The coefficient in \eqref{eq:pguess} gives a lower bound on the loss of
predictability. A matching linear order follows by mixing the threshold
strategy with a perfect magic-square strategy. For completeness, the
standard perfect strategy \cite{Mermin,Peres}, written in the present sign
convention, uses the Pauli matrices
\[
\mathsf X\coloneqq\begin{pmatrix}0&1\\1&0\end{pmatrix},\qquad
\mathsf Z\coloneqq\begin{pmatrix}1&0\\0&-1\end{pmatrix}
\]
and the following array of observables on $\mathbb C^2\otimes\mathbb C^2$:
\begin{equation}\label{eq:perfect-square}
\begin{pmatrix}
\mathsf Z\otimes\Id&\Id\otimes\mathsf Z&\mathsf Z\otimes\mathsf Z\\
\Id\otimes\mathsf X&\mathsf X\otimes\Id&\mathsf X\otimes\mathsf X\\
-\mathsf Z\otimes\mathsf X&-\mathsf X\otimes\mathsf Z&
-(\mathsf X\mathsf Z)\otimes(\mathsf X\mathsf Z)
\end{pmatrix}.
\end{equation}
Each row is a commuting reflection triple with product $+\Id$, and each
column is a commuting reflection triple with product $-\Id$. Alice
measures rows, Bob measures the transposed columns, and they share
$\Phi_4$. Every intersection correlation is one. The first row has four
rank-one joint projections, so its key is uniform and independent of a
trivial Eve.

Let $g(w)$ be the supremum of $\pg(K\mid E,x=0)$ over the strategies of
Theorem~\ref{thm:linear} with score at least $w$. Also let $h_{\min}(w)$ be the infimum of their
conditional min-entropies; equivalently, $h_{\min}(w)=-\log_2g(w)$.
These functions describe the best guessing probability and the least
entropy compatible with a score constraint.

For the same strategy class, define
\[
h_{\mathrm{vN}}(w)\coloneqq\inf_{\text{score}\geq w}H(K\mid E,x=0),
\]
where $H$ is conditional von Neumann entropy, in bits.

\begin{corollary}[Linear onset of randomness]\label{cor:gap}
For $0\leq\delta\leq1-\beta_0$,
\begin{equation}\label{eq:gap}
\frac92\delta\leq1-g(\beta_0+\delta)
\leq\frac{3\delta}{4(1-\beta_0)}.
\end{equation}
For conditional von Neumann entropy,
\begin{equation}\label{eq:vn-gap}
-\log_2\!\left(1-\frac92\delta\right)
\leq h_{\mathrm{vN}}(\beta_0+\delta)
\leq\frac{2\delta}{1-\beta_0}.
\end{equation}
Consequently $1-g(\beta_0+\delta)$, $h_{\min}(\beta_0+\delta)$ and
$h_{\mathrm{vN}}(\beta_0+\delta)$ are all $\Theta(\delta)$ as
$\delta\downarrow0$.
\end{corollary}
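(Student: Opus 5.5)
The plan is to obtain the two lower bounds directly from Theorem~\ref{thm:linear}, and the two upper bounds from one explicit family of strategies. That family mixes the threshold strategy of Corollary~\ref{cor:threshold} with the perfect strategy \eqref{eq:perfect-square}, using a classical flag that Eve also holds.

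\emph{Lower bounds.} Fix $0\le\delta\le1-\beta_0$ and any admissible strategy with score $w\ge\beta_0+\delta$. By \eqref{eq:pguess}, $\pg(K\mid E,x=0)\le1-\frac92(w-\beta_0)\le1-\frac92\delta$. Taking the supremum gives $g(\beta_0+\delta)\le1-\frac92\delta$, which is the left inequality of \eqref{eq:gap}. For the von Neumann entropy, I will use the standard inequality $H(K\mid E)\ge H_{\min}(K\mid E)$ for classical--quantum states together with \eqref{eq:minentropy}. Taking the infimum gives the left side of \eqref{eq:vn-gap}. The logarithm's argument stays positive on the whole range, as noted after Theorem~\ref{thm:linear}.

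\emph{Upper bounds by mixing.} Put $t\coloneqq\delta/(1-\beta_0)\in[0,1]$. Build a strategy as follows.
\begin{itemize}
\item Alice, Bob and Eve each hold a copy of a classical flag $f\in\{0,1\}$ with $\Pr[f=1]=t$. Concretely, the state is $\sqrt{1-t}\,\ket{000}_{F_AF_BF_E}\otimes\psi^{(0)}+\sqrt t\,\ket{111}\otimes\psi^{(1)}$.
\item On flag $0$ the players use the threshold strategy of Corollary~\ref{cor:threshold}, with trivial Eve and a deterministic key.
\item On flag $1$ they use the perfect Mermin--Peres strategy on $\Phi_4$.
\end{itemize}
Each party's measurement is block diagonal in its own flag register. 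Local dimensions stay finite, so the strategy is a legitimate finite-dimensional tensor-product strategy with POVMs.

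The score is $(1-t)\beta_0+t=\beta_0+\delta$. Eve measures her flag. On flag $0$ she knows the key; on flag $1$ the key is uniform over four values and independent of her, since the first row has four rank-one joint projections on $\Phi_4$. Hence her guess succeeds with probability $(1-t)+t/4=1-\frac34t$.

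The conditional entropy is additive over Eve's classical flag: $H(K\mid E)=(1-t)\cdot0+t\cdot2=2t$. The flag-$0$ branch has zero conditional entropy because the key is deterministic there. Substituting $t=\delta/(1-\beta_0)$ gives $1-g(\beta_0+\delta)\le\frac{3\delta}{4(1-\beta_0)}$ and $h_{\mathrm{vN}}(\beta_0+\delta)\le\frac{2\delta}{1-\beta_0}$.

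\emph{Asymptotics.} The $\Theta(\delta)$ claims follow from these bounds together with $h_{\min}=-\log_2g$ and $-\log_2(1-x)=x/\ln2+O(x^2)$.

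The only points needing care are technical. One is checking that the flag-mixed strategy lies in the class over which $g$ and $h_{\mathrm{vN}}$ are defined, with Eve holding a genuine quantum extension; the branch block structure above handles this. The other is the inequality $H\ge H_{\min}$; both are routine.
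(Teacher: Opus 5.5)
Your proposal is correct and follows the paper's proof essentially step for step. The lower bounds come from Theorem~\ref{thm:linear} together with $H\geq H_{\min}$. The upper bounds come from the flagged mixture of the threshold strategy of Corollary~\ref{cor:threshold} with the perfect strategy \eqref{eq:perfect-square} at weight $t=\delta/(1-\beta_0)$, which gives guessing probability $1-\frac34t$ and conditional entropy $2t$; your explicit GHZ-type flag state just spells out the paper's instruction to store the branch in orthogonal local flags and give its value to Eve.
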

\begin{proof}
The first inequality follows from Theorem~\ref{thm:linear}. Mix the
deterministic-key threshold strategy of Corollary~\ref{cor:threshold}
with the perfect strategy
\eqref{eq:perfect-square}, giving the latter weight $t$. Store the branch
in orthogonal local flags and give its value to Eve. The score and guessing
probability are
\[
w_t=(1-t)\beta_0+t,\qquad p_t=1-\frac34t.
\]
Choosing $t\coloneqq\delta/(1-\beta_0)$ proves the second inequality. Taking
$-\log_2$ gives
\[
-\log_2(1-\tfrac92\delta)
\leq h_{\min}(\beta_0+\delta)
\leq-\log_2\!\left(1-\frac{3\delta}{4(1-\beta_0)}\right),
\]
which proves the min-entropy order as well.
For a classical key with quantum side information,
$H(K\mid E)\geq H_{\min}(K\mid E)$ \cite{KRS}, so the same lower
bound holds for $h_{\mathrm{vN}}$. In the flagged mixture above, the threshold
branch has conditional entropy zero and the perfect branch has conditional
entropy two. Since Eve knows the flag, the mixture has conditional entropy
$2t$. This proves the upper bound in \eqref{eq:vn-gap}.
Both bounds are linear to first order at zero, proving the final assertion.
\end{proof}

Every strategy of Theorem~\ref{thm:linear} with score $w$ satisfies $H(K\mid E,x=0)\geq f(w)$, where
\[
f(w)\coloneqq\max\left\{0,-\log_2\!\left[1-\frac92(w-\beta_0)\right]\right\},
\qquad 0\leq w\leq1.
\]
The logarithm is defined on this interval. The function $-\log_2[1-\frac92(w-\beta_0)]$ is convex and
nondecreasing, and taking the maximum with zero preserves both properties.
Above $\beta_0$ this is the lower bound in \eqref{eq:vn-gap}; below
$\beta_0$, nonnegativity of classical--quantum conditional entropy supplies it.

The bound $f$ also applies to the magic-square games of \cite{CRS}.
In the convention of \cite[Secs.~4.1 and~4.5]{CRS}, with uniform independent inputs and Alice's key input $x=0$ as here, Alice and Bob may each return
any triple of signs and win only when Alice's parity is even, Bob's is odd and the agreement condition holds. Replace the last sign of each answer so that Alice's
parity is even and Bob's is odd. This local postprocessing leaves every
winning answer unchanged, so the resulting score $w'$ satisfies $w'\geq w$. The postprocessed strategy is a strategy of Theorem~\ref{thm:linear}, with key
$\widetilde K$, and $\widetilde K$ is in bijection with Alice's original first two signs $T$.
Consequently,
\[
H(T\mid E,x=0)=H(\widetilde K\mid E,x=0)
\geq f(w')\geq f(w),
\]
and the full original row $K$ has conditional entropy at least that of $T$.

In the parity-extended magic-square games of \cite[Sec.~4.5, Eq.~(9)]{CRS},
Alice and Bob play with $N-2\geq0$ additional parties, who receive no input and each output two bits; let $p_1$ and $p_2$ be the parities of their first and second bits. With uniform independent inputs, the players win when Alice's row has product $+1$, Bob's column has product $-1$, and Alice's $j$th sign times Bob's sign in row $x$ equals $(-1)^{\sigma(x,j)}$. The mask $\sigma$ vanishes in row $x=0$, Alice's key input, and equals $p_1$, $p_2$ and $p_1\oplus p_2$ in the three columns of rows $1$ and $2$ \cite[Eq.~(8), Lemma~4.6(ii)]{CRS}. The proof of
\cite[Lemma~6.6, Eqs.~(21)--(22)]{CRS} conditions on the additional
parties' outputs, reduces each branch to a two-player strategy, and applies
Jensen's inequality to a convex two-player bound. The bound $f$ applies to each branch: there the mask is fixed and equal in rows $1$ and $2$, so multiplying Bob's signs in these two rows of column $j$ by $(-1)^{\sigma(1,j)}$ keeps his parity, leaves row $0$ unchanged and gives a strategy for the two-player game of the previous paragraph with the same winning probability. The $N$-party winning probability $w_N$ is the average of these branch probabilities \cite[Eq.~(25)]{CRS}, because the additional parties receive no input. Since $f$ is convex, this
argument gives, for every number $N\geq2$ of parties, Alice and Bob included, and every
finite-dimensional local strategy with quantum side information $E$, which may also include the additional parties' outputs,
\[
H(K\mid E,x=0)\geq H(T\mid E,x=0)\geq f(w_N),
\]
where, in a single round, $w_N$ is the winning probability of the $N$-party game, $x=0$ is Alice's key input, and $K$ and $T$ are her full output row and its first two signs at that input.

The numerical bound $33.4\,w-31.47$ of \cite[Lemma~6.7]{CRS}, which also holds for the parity-extended games by \cite[Lemma~6.6]{CRS}, bounds the same entropy $H(K\mid E,x=0)$ in terms of the same two-player score $w$. It is negative for $w<31.47/33.4\approx0.9422$, while $f$ is positive above $\beta_0\approx0.9388$, and it exceeds $f$ above $w\approx0.943$, reaching $1.93$ bits at $w=1$ against $f(1)\approx0.46$. Figure~\ref{fig:entropy} plots both bounds.

\begin{figure}[t]
\centering
\BeginAccSupp{method=escape,Alt={Two line plots compare entropy lower bounds. The solid bound turns positive at a lower score; the dashed bound is larger above about 0.943. See the caption for values.},ActualText={Two line plots compare entropy lower bounds. The solid bound turns positive at a lower score; the dashed bound is larger above about 0.943. See the caption for values.}}%
\begin{tikzpicture}
\begin{axis}[width=0.47\textwidth,height=5cm,xmin=0.938,xmax=1,ymin=0,ymax=2.2,
  xlabel={score $w$},ylabel={bits},xtick={0.94,0.96,0.98,1},scaled x ticks=false,
  xticklabel style={/pgf/number format/fixed,/pgf/number format/fixed zerofill,/pgf/number format/precision=2},
  legend pos=north west,legend cell align=left,legend style={font=\footnotesize,draw=none,fill=white,fill opacity=0.8,text opacity=1}]
\addplot[black,thick] coordinates {(0.93883,0.00000) (0.94006,0.00796) (0.94128,0.01597) (0.94250,0.02403) (0.94373,0.03212) (0.94495,0.04027) (0.94617,0.04846) (0.94740,0.05669) (0.94862,0.06498) (0.94984,0.07331) (0.95107,0.08169) (0.95229,0.09012) (0.95351,0.09860) (0.95474,0.10713) (0.95596,0.11571) (0.95718,0.12434) (0.95841,0.13302) (0.95963,0.14176) (0.96085,0.15055) (0.96208,0.15939) (0.96330,0.16829) (0.96452,0.17724) (0.96575,0.18625) (0.96697,0.19531) (0.96819,0.20444) (0.96942,0.21362) (0.97064,0.22286) (0.97186,0.23216) (0.97309,0.24151) (0.97431,0.25093) (0.97553,0.26042) (0.97676,0.26996) (0.97798,0.27957) (0.97920,0.28924) (0.98043,0.29898) (0.98165,0.30878) (0.98287,0.31866) (0.98410,0.32860) (0.98532,0.33860) (0.98654,0.34868) (0.98777,0.35883) (0.98899,0.36905) (0.99021,0.37935) (0.99144,0.38971) (0.99266,0.40016) (0.99388,0.41068) (0.99511,0.42127) (0.99633,0.43195) (0.99755,0.44270) (0.99878,0.45354) (1.00000,0.46445)};
\addlegendentry{$f(w)$}
\addplot[black,thick,dashed] coordinates {(0.93800,0.00000) (0.94222,0.00000) (1.00000,1.93000)};
\addlegendentry{$\max\{0,\,33.4\,w-31.47\}$}
\end{axis}
\end{tikzpicture}\hfill
\begin{tikzpicture}
\begin{axis}[width=0.47\textwidth,height=5cm,xmin=0.938,xmax=0.946,ymin=0,ymax=0.13,
  xlabel={score $w$},ylabel={bits},xtick={0.938,0.940,0.942,0.944,0.946},scaled x ticks=false,scaled y ticks=false,
  xticklabel style={/pgf/number format/fixed,/pgf/number format/fixed zerofill,/pgf/number format/precision=3},
  yticklabel style={/pgf/number format/fixed,/pgf/number format/precision=2}]
\addplot[black,thick] coordinates {(0.93800,0.00000) (0.93883,0.00000) (0.93901,0.00116) (0.93919,0.00233) (0.93937,0.00349) (0.93955,0.00466) (0.93973,0.00583) (0.93991,0.00700) (0.94009,0.00817) (0.94027,0.00934) (0.94044,0.01051) (0.94062,0.01168) (0.94080,0.01285) (0.94098,0.01403) (0.94116,0.01520) (0.94134,0.01638) (0.94152,0.01756) (0.94170,0.01873) (0.94188,0.01991) (0.94206,0.02109) (0.94224,0.02227) (0.94242,0.02346) (0.94260,0.02464) (0.94277,0.02582) (0.94295,0.02701) (0.94313,0.02819) (0.94331,0.02938) (0.94349,0.03057) (0.94367,0.03176) (0.94385,0.03295) (0.94403,0.03414) (0.94421,0.03533) (0.94439,0.03652) (0.94457,0.03772) (0.94475,0.03891) (0.94492,0.04011) (0.94510,0.04130) (0.94528,0.04250) (0.94546,0.04370) (0.94564,0.04490) (0.94582,0.04610) (0.94600,0.04730)};
\addplot[black,thick,dashed] coordinates {(0.93800,0.00000) (0.94222,0.00000) (0.94600,0.12640)};
\draw[dotted] (axis cs:0.93883,0) -- (axis cs:0.93883,0.13) node[pos=0.92,right,font=\footnotesize] {$\beta_0$};
\end{axis}
\end{tikzpicture}%
\EndAccSupp{}
\caption{Lower bounds on the conditional von Neumann entropy $H(K\mid E,x=0)$ of Alice's key, in bits, as functions of the magic-square score $w$: the bound $f$ of this section (solid) and the numerical bound $33.4\,w-31.47$ of \cite[Lemma~6.7]{CRS}, replaced by zero where it is negative (dashed). Left: $0.938\leq w\leq1$. Right: the window $0.938\leq w\leq0.946$ near the threshold $\beta_0\approx0.9388$ (dotted). The bound $f$ is positive for every $w>\beta_0$; the numerical bound is positive only for $w>31.47/33.4\approx0.9422$, and it exceeds $f$ for $w$ above about $0.943$.}\label{fig:entropy}
\end{figure}

Appendix~\ref{app:projection} contains a general projection argument giving a quadratic bound; the
linear bound uses the additional distance structure of the four legal
key outputs.
\section{Optimal codes hiding the full input parity}\label{sec:rac}
The two row measurements can also be viewed as remote preparations of two
ensembles. Their common average state hides the row label, which is the
parity of the encoded sign string. This correspondence transfers both the
rectangle optimum and its rigidity to a prepare-and-measure random access
code. We first establish the equality of scores in the two models and then
classify the original encoding states.
Consider eight equiprobable normalized states $\rho_s$, indexed by
$s\in\{\pm1\}^3$, and binary decoding measurements. The requested coordinate
is independent of the input string and uniform on $\{1,2,3\}$.
Success is averaged over the eight inputs and three coordinate requests.
Require only
\begin{equation}\label{eq:privacy}
\frac14\sum_{\prod s_j=+1}\rho_s
=\frac14\sum_{\prod s_j=-1}\rho_s.
\end{equation}
The common encoding space has arbitrary finite dimension. Two-bit parities
may be revealed.

Parity-oblivious encodings were introduced in \cite{SBKTP};
the present model is the three-bit instance of the odd-parity-oblivious
model distinguished by Chailloux, Kerenidis, Kundu, and Sikora \cite{CKKS}, which hides the parity of every subset of odd size at least three.
They leave the optimal odd-parity-oblivious bounds as an open problem
\cite[Sec.~1.1, arXiv v1]{CKKS}; for even-parity-oblivious encodings they
prove an analogous equivalence with the INDEX game \cite[Theorem~2]{CKKS}.
The result below determines the optimal three-bit
average success probability for arbitrary finite encoding dimension.
The three-bit prepare-and-measure model of Singh, Sasmal, Nautiyal, and Pan
\cite[Sec.~IV.2, arXiv v1]{Singh} instead hides all three two-bit parities
as well as the full parity. Its feasible encodings therefore obey stronger
privacy constraints than the codes considered here.

Roy and Pan consider the same full-parity constraint on a
four-dimensional encoding space. In their arXiv v1, the paragraph following
Eq.~(25) imposes, for one coordinate $j$, equality of $\sum_ss_j\rho_s$ over the even and over the odd strings $s$, such as their $A_{3,1}=A_{3,2}$ for $j=1$, stated to be without loss of generality. The
resulting calculation gives $1/2+(1+\sqrt2)/6\approx0.902369$, which they state to be optimal
\cite[Sec.~IV, Eqs.~(17), (25)--(27), arXiv v1]{RP}.
The known four-dimensional code, displayed below, satisfies \eqref{eq:privacy}
and attains $1/2+1/\sqrt6\approx0.908248$ without such an equality: for it the two sums are $D_j^0\neq C_j^0$.
Since their success probability is also the average over the eight inputs
and three coordinate requests, the displayed code achieves a strictly
larger success probability in the same four-dimensional problem.

The larger value can be checked directly with the reference projections:
\[
\rho_{\alpha_a}\coloneqq P_a^0,\qquad \rho_{-\alpha_a}\coloneqq Q_a^0,\qquad
F_{b|j}\coloneqq\frac{\Id+bX_j^0/q}{2}.
\]
Both parity averages equal $\Id_4/4$. Taking diagonal entries of the anticommutator in \eqref{eq:construction} in the two bases gives
success probability $(1+\sqrt{2/3})/2$ for all eight inputs and all three
coordinate requests.

\begin{proposition}[Equality of optimal scores]\label{prop:RAC}
The optimal average success probability under \eqref{eq:privacy} equals
$\omega_Q(2,3)$ and hence equals \eqref{eq:value}.
\end{proposition}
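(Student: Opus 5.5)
We prove the two inequalities separately, passing between the prepare-and-measure code and the rectangle by remote state preparation in one direction and steering in the other.

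\emph{Rectangle bounds code.} Let $\{\rho_s\}$ on $\mathbb C^d$ satisfy \eqref{eq:privacy}, with decoders $F_{b|j}$. Put $\sigma\coloneqq\frac14\sum_{\prod s=+1}\rho_s=\frac14\sum_{\prod s=-1}\rho_s$. Choose a purification $\ket{\psi}\in\mathcal H_A\otimes\mathbb C^d$ of $\sigma$, say $\ket\psi=\sum_i\sqrt{\lambda_i}\ket{i}\ket{i}$ in an eigenbasis of $\sigma$.

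By the steering (Hughston--Jozsa--Wootters) theorem, each decomposition $\sigma=\sum_s\frac14\rho_s$ is realized by an Alice POVM on $\supp\sigma$:
\[
A_{s|x}\coloneqq\tfrac14\,\bigl(\sigma^{-1/2}\rho_s\sigma^{-1/2}\bigr)^{T},
\]
where $s$ ranges over the strings of parity $(-1)^x$. Completeness follows from the parity-average condition. The post-measurement Bob state is $\Tr_A[(A_{s|x}\otimes\Id)\psi]=\rho_s/4$.

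We label Alice's outputs so that row $0$ outputs $s$ (even) and row $1$ outputs $-s$ (also even, since $n=3$ is odd). This matches the paper's sign convention, where row $1$ is physically negated. Bob on column $j$ measures $F_{\cdot|j}$ and outputs the two opposite signs $(b,-b)$.

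Averaging over the uniform row $x$, the uniform column $j$ and Alice's outcome gives exactly the code's average success: each $s$ appears with probability $\frac12\cdot\frac14$, and the agreement condition is $b=s_j$. One still has to check that the second-row agreement matches once the negation is undone. This is just bookkeeping of the conventions in \eqref{eq:relations}. Hence the optimal code success is at most $\omega_Q(2,3)$.

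\emph{Code bounds rectangle.} Conversely, take any finite-dimensional rectangle strategy with state $\rho_{AB}$. Define
\[
\rho_s\coloneqq 4\,\Tr_A[(A_{s|x}\otimes\Id)\rho_{AB}]\big/\bigl(4\,p(s\mid x)\bigr),
\]
normalized, where $s$ encodes row $x$'s output in the same parity convention.

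The difficulty is that the outputs need not be uniform, whereas the code is equiprobable. We fix this by symmetrization. Independently pick a uniformly random element of the parity-preserving sign group $\{k\in\{\pm1\}^3:\prod k=1\}$. Alice multiplies her outputs in both rows coordinatewise by $k$, and Bob multiplies his column-$j$ signs by $k_j$. The flipped key is shared as a classical flag.

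This symmetrization keeps the score and makes $s$ uniform within each parity class. To apply the flag, the encoding space is enlarged by a classical register, and the decoder reads it. The two parity averages are then both equal to $\rho_B\otimes(\text{flag mixture})$, by no-signalling, so \eqref{eq:privacy} holds. The resulting code has average success equal to the rectangle score. Therefore the code optimum is at least $\omega_Q(2,3)$.

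Theorem~\ref{thm:rigidity} then gives the value \eqref{eq:value}; the attaining code is already displayed above.

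\emph{Main obstacle.} The delicate step is the steering map when $\sigma$ is singular. The remedy is to restrict to $\supp\sigma$. This is legitimate because each $\rho_s\le4\sigma$, so every $\rho_s$ is supported in $\supp\sigma$. We also have to verify positivity and completeness of the $A_{s|x}$ defined by the transpose formula.

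A secondary point is checking that the symmetrization preserves the winning predicate for the negated second row. It does, since multiplying by $k$ preserves Alice's row parity and $k_j$ acts identically on both of Bob's entries in column $j$.
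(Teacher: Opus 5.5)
Your proof is correct and follows essentially the same route as the paper. The code-to-rectangle direction uses the same canonical purification of the common parity average and the same steering effects $\tfrac14(\tau^{-1/2}\rho_s\tau^{-1/2})^T$ on its support, with the odd row negated back to even parity. The rectangle-to-code direction uses the same symmetrization by a uniform shared even sign string kept in a classical Bob register, with no-signalling giving equal parity averages. Your displayed normalized $\rho_s$ is unnecessary and undefined when $p(s\mid x)=0$; after symmetrization the encodings are simply $\sum_k\ket k\bra k\otimes\sigma_{ks}$, where $\sigma_t$ are Bob's subnormalized conditional states.
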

\begin{proof}
For a rectangle strategy, use Alice's first row as an even sign string and
negate the second row to obtain an odd sign string. Let $\sigma_s$ be Bob's
subnormalized conditional state. No signalling gives the same sum $\rho_B$
within both parity classes.
To make the eight conditional states equiprobable, append a shared
uniform even sign string $g$ and retain $g$ in an orthogonal Bob register. Multiply each Alice sign and Bob's coordinate guess by $g_j$.
Products such as $gs$ below are coordinatewise products of sign strings.
The new subnormalized states are
\[
\sigma'_s\coloneqq\tfrac14\sum_{g\text{ even}}\ket g\bra g\otimes\sigma_{gs}.
\]
They all have trace $1/4$, preserve the score, and their sums in the two
parity classes agree. Thus $\rho'_s\coloneqq4\sigma'_s$ is an admissible code.

Conversely, let $\tau$ be the common parity average in \eqref{eq:privacy}.
On its support use the canonical purification $(\Id\otimes\sqrt\tau)\sum_t\ket{t,t}$ of $\tau$ in a chosen basis, with Alice on the first factor,
and Alice effects
\[
E_s\coloneqq\tfrac14\bigl(\tau^{-1/2}\rho_s\tau^{-1/2}\bigr)^T.
\]
Within each parity class the effects sum to identity. They steer Bob to
$\rho_s/4$, since $\sqrt\tau E_s^T\sqrt\tau=\rho_s/4$.
Use the code's decoders as Bob's measurements and negate Alice's odd row
back to even parity. This realizes the rectangle with the same score.
All inverses are restricted to $\supp\tau$; every $\rho_s$ is supported
there because their positive average is $\tau$.
\end{proof}

The construction of a code from a rectangle strategy enlarges Bob's dimension by a factor of four.
Thus a bound for codes on two qubits does not yield the unrestricted upper
bound through this equivalence. Suzuki's two orthonormal bases, one for
the even input strings and one for the odd input strings \cite{Suzuki},
satisfy \eqref{eq:privacy} immediately, so the converse construction
above turns them into a four-dimensional optimal rectangle strategy by
remote preparation. With the bit $0$ read as the sign $+1$, conjugation by $D_2^0$ maps the states and decoders of Suzuki's $(n,n-1)$ code for $n=3$ \cite[Secs.~III.A--III.B]{Suzuki} exactly to the code $\rho_{\pm\alpha_a}$, $F_{b|j}$ displayed before Proposition~\ref{prop:RAC}.

The converse construction preserves the code's original encoding
system. Applying rectangle rigidity in that direction therefore determines
all optimal encoding states with one common isometry. The support
qualification below separates the part of the Hilbert space used by the
code from unused dimensions.

\begin{theorem}[Classification of optimal codes]\label{thm:RACclass}
Let $\rho_s$, $s\in\{\pm1\}^3$, be normalized states on a common
finite-dimensional space $\mathcal H$ satisfying \eqref{eq:privacy}, and
let $F_{b|j}$, $b\in\{\pm1\}$, be binary decoding POVMs. Suppose
\[
\frac1{24}\sum_{s,j}\Tr(\rho_sF_{s_j|j})
=\frac{1+\sqrt{2/3}}2.
\]
There are an ancillary space $\mathcal K$, a state $\sigma$ on it, and
one isometry $V:\mathcal H\longrightarrow\mathbb C^4\otimes\mathcal K$
such that, for every $a\in\{1,2,3,4\}$,
\begin{equation}\label{eq:RACnormal}
V\rho_{\alpha_a}V^*=P_a^0\otimes\sigma,
\qquad
V\rho_{-\alpha_a}V^*=Q_a^0\otimes\sigma.
\end{equation}
If $\tau$ is the common parity average in \eqref{eq:privacy}, then
$V(\supp\tau)=\mathbb C^4\otimes\supp\sigma$. On this subspace,
\begin{equation}\label{eq:RACdecoder}
VF_{b|j}V^*=\frac{\Id_4+bB_j^0}{2}\otimes\Id.
\end{equation}
Conversely, these states and decoder restrictions attain the optimum.
In particular, $\operatorname{rank}\tau=4\operatorname{rank}\sigma$,
and every encoding state has the same rank as $\sigma$.
\end{theorem}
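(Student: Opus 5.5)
We transport the code to the rectangle with the converse construction in the proof of Proposition~\ref{prop:RAC}, apply Theorem~\ref{thm:rigidity}, and read the encoding states back off the extracted normal form. Let $\tau$ be the common parity average, and take the canonical purification $\psi=(\Id\otimes\sqrt\tau)\sum_t\ket{t,t}$ on $\supp\tau\otimes\supp\tau$. Alice gets the effects $E_s=\tfrac14(\tau^{-1/2}\rho_s\tau^{-1/2})^T$, with her odd row negated back to even parity. Bob gets the original decoders, with the first sign in column $j$ given by $F_{+|j}-F_{-|j}$, restricted to the part of the encoding space he actually uses. The proof of Proposition~\ref{prop:RAC} shows that this rectangle strategy reproduces the code's score, so it is optimal. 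Here Eve is trivial, since the state is pure on $AB$. Bob's system is exactly the original encoding space, and his reduced state is $\tau$ up to the identification of the canonical purification.

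Theorem~\ref{thm:rigidity} now supplies Bob's isometry $V_B$ mapping $\supp\tau$ onto $\mathbb C^4\otimes\mathcal K_B$, and Alice's isometry $V_A$. It also sends $\psi$ to $\Phi_4\otimes\eta_{A'B'}$, with Alice's effects equal to $P_a^0\otimes\Id$ and $Q_a^0\otimes\Id$ on her support and Bob's observables equal to $B_j^0\otimes\Id$ on $\supp\tau$. The last statement is exactly \eqref{eq:RACdecoder} with $V\coloneqq V_B$, extended isometrically off $\supp\tau$, because the binary effects are $(\Id\pm B_j)/2$. For \eqref{eq:RACnormal}, we use steering: $\rho_s/4$ is Bob's conditional state $\Tr_A[(E_s\otimes\Id)\ket\psi\bra\psi]$. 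After the isometries this becomes $\Tr_{A_0A'}[(P_a^0\otimes\Id_{A'})\otimes\Id\,(\Phi_4\otimes\eta)(\Phi_4\otimes\eta)^*]$, which equals $\tfrac14(P_a^0)^T\otimes\sigma$, where $\sigma\coloneqq\Tr_{A'}\ket\eta\bra\eta$. The same computation gives $\tfrac14(Q_a^0)^T\otimes\sigma$ for the odd strings. The sign negation of the odd row only relabels outcomes, so the label $-\alpha_a$ is attached to $Q_a^0$.

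A transpose remains. $P_a^0$ is real diagonal, and $Q_a^0=U_0\ket a\bra aU_0^*$ with $U_0$ real, so both are real symmetric and the transposes disappear. The convention that $\Phi_4$ is a sum of $\ket{a,a}$ in the same basis makes this consistent. We also need $\supp\sigma=\mathcal K_B$, which holds because Bob's reduced state $\Id/4\otimes\sigma$ has support $V(\supp\tau)$. The rank statements then follow immediately: $\operatorname{rank}\tau=4\operatorname{rank}\sigma$, and each $\rho_s$ is a rank-one projection tensored with $\sigma$, so it has the same rank as $\sigma$. For the converse, the trace identities from \eqref{eq:construction} give success $(1+\sqrt{2/3})/2$ on each input, with $\sigma$ factoring out.

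The main obstacle is the bookkeeping between the two pictures. Bob's support in the rectangle must be identified with the code's $\supp\tau$ and not an isomorphic copy, so that the single isometry $V$ acts on the original $\mathcal H$. The decoders only have to be analysed on $\supp\tau$, and their behaviour outside it is unconstrained. I would handle this by defining the purification directly on $\supp\tau\otimes\supp\tau$ with Bob's factor literally the subspace $\supp\tau\subseteq\mathcal H$. Then Bob's reduced state is $\tau$ itself: computing $\Tr_A$ of $(\Id\otimes\sqrt\tau)\ket{\Omega}\bra{\Omega}(\Id\otimes\sqrt\tau)$, with $\ket\Omega=\sum_t\ket{t,t}$, gives exactly $\tau$, so no transpose enters on Bob's side.
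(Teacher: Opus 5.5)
Your proposal is correct and takes essentially the same route as the paper. You build the rectangle strategy from the canonical purification of $\tau$ with the effects $E_s$, apply Theorem~\ref{thm:rigidity}, and read the encodings off by steering through $\Phi_4\otimes\eta$, where the transposes vanish because $P_a^0,Q_a^0$ are real symmetric. The support and rank claims then follow from $V\tau V^*=\Id_4/4\otimes\sigma$.

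There is one small bookkeeping difference. The paper keeps the original decoders on all of $\mathcal H$, so the invariance clause of Theorem~\ref{thm:rigidity} directly shows that the $F_{b|j}$ leave $\supp\tau$ invariant. If you instead compress the decoders to $\supp\tau$, you should add one line: an effect $0\leq F\leq\Id$ whose compression to a subspace is a projection must leave that subspace invariant.
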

\begin{proof}
Every $\rho_s$ is supported on $\supp\tau$, since $\rho_s\leq4\tau$. On this support
choose a basis and the canonical purification
$\ket{\psi_\tau}\coloneqq(\Id\otimes\sqrt\tau)\sum_t\ket{t,t}$.
Use the effects $E_s$ from the proof of Proposition~\ref{prop:RAC} on the
purifying system and the
original decoders on the encoding system. The conditional encoding state
is $\rho_s/4$, each outcome has probability $1/4$, and the rectangle
score equals the code score. The strategy is therefore optimal.

Theorem~\ref{thm:rigidity} supplies local support identifications $W_A,W_B$
such that
\[
(W_A\otimes W_B)\ket{\psi_\tau}
=\ket{\Phi_4}\otimes\ket\eta_{A'B'}.
\]
Alice's effects $E_{\alpha_a},E_{-\alpha_a}$ become
$P_a^0\otimes\Id,Q_a^0\otimes\Id$, and Bob's decoding observables
become $B_j^0\otimes\Id$. Here Bob's system is precisely the original
encoding support. Put $\sigma\coloneqq\Tr_{A'}\ket\eta\bra\eta$. Measuring
an effect $R$ on one half of $\Phi_4$ prepares $R^T/4$ on the other half.
Comparison with the steered states $\rho_s/4$ gives
\[
W_B\rho_{\alpha_a}W_B^*=(P_a^0)^T\otimes\sigma,
\qquad
W_B\rho_{-\alpha_a}W_B^*=(Q_a^0)^T\otimes\sigma.
\]
Both reference projections are real symmetric, so these are the states
in \eqref{eq:RACnormal}. The ancillary state is the same for all eight
inputs because it is obtained from the single vector $\eta$ before
either row measurement.

Averaging within a parity class yields
$W_B\tau W_B^*=\Id_4/4\otimes\sigma$ and proves the support identity.
Since $\tau$ is Bob's reduced state, Theorem~\ref{thm:rigidity} shows
that the decoders leave $\supp\tau$ invariant and fixes them there. Extend $W_B$ from $\supp\tau$ to an
isometry $V$ of all $\mathcal H$ by enlarging $\mathcal K$ and mapping
the unused orthogonal complement into the new target dimensions. Extend
$\sigma$ by zero on the added ancillary space. All state equalities and
support-restricted decoder equalities are preserved. The ranks follow
from the tensor forms, and direct evaluation of the reference code proves
the converse.
\end{proof}

Thus, up to one isometry, every optimal finite-dimensional code consists of the same two
four-dimensional bases, one for each input parity, with an input-independent
ancillary state.
Its joint encoding support has dimension divisible by four. The ambient
encoding space can have additional unused dimensions.

The upper bound of Proposition~\ref{prop:RAC} also extends to encodings whose full input parity is
only approximately hidden. We apply the construction used
by Winter \cite[proof of Lemma~2, Eq.~(2), arXiv v6]{Winter} to the two
parity averages. Its ensemble version keeps the encoding space and decoders
fixed and gives an exact relation between the original and corrected scores.
A second, direct argument with the three decoding observables gives a
success bound strictly below one whenever the trace distance between the
two parity averages is less than one.

\begin{proposition}[Approximate full-parity hiding]\label{prop:approx-parity}
Let $\rho_s$, $s\in\{\pm1\}^3$, be arbitrary normalized states on a common
finite-dimensional Hilbert space, and let $F_{b|j}$, $b\in\{\pm1\}$,
be binary decoding POVMs. Define
\[
 \tau_\epsilon\coloneqq\frac14\sum_{\prod_j s_j=\epsilon}\rho_s,
 \qquad e\coloneqq\frac12\|\tau_+-\tau_-\|_1,
 \qquad p\coloneqq\frac1{24}\sum_{s,j}\Tr(\rho_sF_{s_j|j}),
\]
where $\epsilon\in\{\pm1\}$. There are states $\widehat\rho_s$ on the
same Hilbert space satisfying \eqref{eq:privacy} such that
\begin{equation}\label{eq:parity-correction}
 \frac12\|\widehat\rho_s-\rho_s\|_1\leq\frac{e}{1+e}
 \quad\text{for every }s,
 \qquad
 \widehat p=\frac{p+e/2}{1+e},
\end{equation}
where $\widehat p$ uses the original decoders $F_{b|j}$. Moreover,
\begin{equation}\label{eq:approx-parity-bound}
 p\leq\min\left\{\frac12+\frac{1+e}{\sqrt6},\,
 \frac{1+\cos\bigl(\frac13\arccos e\bigr)}2\right\}.
\end{equation}
In particular, if $p>\omega_Q(2,3)$, then
$e\geq\sqrt6\bigl(p-\omega_Q(2,3)\bigr)$.
\end{proposition}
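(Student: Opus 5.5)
The plan has three parts: an explicit correction that restores exact parity hiding while keeping the decoders, a transfer of the value bound through that correction, and an independent Helstrom-type argument for the second term of the minimum.

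\textbf{The correction.} Write $\tau_+-\tau_-=\Delta_+-\Delta_-$ with $\Delta_\pm\geq0$ of orthogonal supports. Since both parity averages are normalized, $\Tr\Delta_+=\Tr\Delta_-=e$. If $e=0$, put $\widehat\rho_s\coloneqq\rho_s$. Otherwise, following Winter's construction applied to the two averages, set $\sigma_\pm\coloneqq\Delta_\pm/e$ and
\[
\widehat\rho_s\coloneqq\frac{\rho_s+e\,\sigma_-}{1+e}\quad(\textstyle\prod_j s_j=+1),\qquad
\widehat\rho_s\coloneqq\frac{\rho_s+e\,\sigma_+}{1+e}\quad(\textstyle\prod_j s_j=-1).
\]
The new parity averages are $(\tau_++\Delta_-)/(1+e)$ and $(\tau_-+\Delta_+)/(1+e)$, and these are equal, so \eqref{eq:privacy} holds.

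\textbf{The trace-distance and score estimates.} The difference $\widehat\rho_s-\rho_s$ equals $e(\sigma-\rho_s)/(1+e)$ for a state $\sigma$. Half its trace norm is therefore at most $e/(1+e)$.

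For the score, keep the decoders $F_{b|j}$ fixed. Among the four even strings each coordinate takes each sign twice, and the same holds among the odd strings. Hence
\[
\sum_{\prod s=\pm1}F_{s_j|j}=2\Id\quad\text{for every }j.
\]
The added terms then contribute exactly
\[
\frac{e}{24(1+e)}\cdot(6+6)=\frac{e}{2(1+e)}.
\]
This gives the identity $\widehat p=(p+e/2)/(1+e)$.

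\textbf{First bound in the minimum.} Proposition~\ref{prop:RAC} gives $\widehat p\leq\omega_Q(2,3)=\tfrac12+\tfrac1{\sqrt6}$. Rearranging yields
\[
p\leq(1+e)\,\omega_Q(2,3)-\frac e2=\frac12+\frac{1+e}{\sqrt6}.
\]
The final claim, $e\geq\sqrt6\bigl(p-\omega_Q(2,3)\bigr)$, is a restatement of this inequality.

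\textbf{Second bound in the minimum.} The bound $(1+\cos(\tfrac13\arccos e))/2$ means the following. With $c\coloneqq2p-1$ and $c\geq\cos(\pi/6)$ (otherwise there is nothing to prove), the bound is equivalent to the Chebyshev inequality $T_3(c)=4c^3-3c\leq e$.

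I would argue through Helstrom's bound. Put $O_j\coloneqq F_{+|j}-F_{-|j}$, which are contractions, and note that $c$ is the average over $s$ and $j$ of $s_j\Tr(\rho_sO_j)$. The goal is to build from the $O_j$ a single contraction $Z$, independent of $s$, such that
\[
\tfrac12\Tr\bigl(Z(\tau_+-\tau_-)\bigr)\geq T_3(c).
\]
Since $\tfrac12\Tr\bigl(Z(\tau_+-\tau_-)\bigr)\leq e$ for every contraction $Z$, this would finish the proof.

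The first candidate I would try is the sign-symmetrized cubic $Z=\tfrac1{6}\sum_{\pi}O_{\pi(1)}O_{\pi(2)}O_{\pi(3)}$ combined with $-\tfrac13\sum_jO_j$ terms, which emulates $T_3$ of the averaged decoder. The point of this choice is that for commuting $\pm1$ values the parity $s_1s_2s_3$ is the odd part of such a cubic. Convexity of $T_3$ on $[\cos(\pi/6),1]$ (Jensen over $s$) would then pass from per-string biases to the average $c$.

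\textbf{Expected obstacle.} This last step is the main difficulty. Because the $O_j$ do not commute, it is not clear that the cubic combination is a contraction. It is also not clear that its expectation in $\rho_s$ is bounded below by $T_3$ of the per-string bias. If this direct construction fails, I would instead reduce to a qubit-like two-dimensional picture via Jordan's lemma applied to pairs of decoders.
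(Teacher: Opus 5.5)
The correction, the trace-distance estimate, the score identity, the first term of the minimum via Proposition~\ref{prop:RAC}, and the final claim are all correct. In the correction you add the negative part of $\tau_+-\tau_-$ to the even strings and the positive part to the odd ones. These steps are exactly the paper's argument.

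The gap is the second term of the minimum, which you leave unproved. Your reduction to $T_3(c)\le e$ for $c\ge\cos(\pi/6)$ is correct. So is the Helstrom framing: $\tfrac12\operatorname{Re}\Tr(ZD)\le e$ for every contraction $Z$, with $D\coloneqq\tau_+-\tau_-$. But everything rests on the lower bound $\tfrac12\operatorname{Re}\Tr(ZD)\ge T_3(c)$, and your route to it fails as stated:
\begin{itemize}
\item The candidate $Z$ is not pinned down. Once the linear terms are included it is in general not a contraction: the cubic minus $\tfrac13\sum_jO_j$ equals $-\tfrac43$ at commuting signs $(1,1,-1)$.
\item Those linear terms contribute $\tfrac12\Tr(DO_j)$, which has no definite sign and is not controlled by $c$.
\item The Jensen step from per-string biases to $c$ is unjustified. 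A single string can have negative bias while $c\ge\cos(\pi/6)$, and $T_3$ is convex only on $[0,1]$.
\item Jordan's lemma gives a common two-dimensional block structure for two projections only, so it does not reduce three decoders.
\end{itemize}

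The missing idea is to use a product of reflections and to average over the strings inside one vector, rather than after per-string estimates. Dilate the decoders by one isometry to reflections $B_j$; this preserves $p$ and $e$. Purify each $\rho_s$ to $\psi_s$, and put $\Psi\coloneqq8^{-1/2}\bigoplus_s\psi_s$ and $R_j\coloneqq\bigoplus_s s_j(B_j\otimes\Id)$. Then $\langle\Psi,R_j\Psi\rangle=2p_j-1$ and $\operatorname{Re}\langle\Psi,R_1R_2R_3\Psi\rangle=\tfrac12\operatorname{Re}\Tr(DB_1B_2B_3)\le e$. So the right contraction is the Hermitian part of $B_1B_2B_3$.

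The angle $\arccos\operatorname{Re}\langle u,v\rangle$ is a unitarily invariant metric on the unit sphere. Along the path $\Psi,R_1\Psi,R_1R_2\Psi,R_1R_2R_3\Psi$ it gives
\[
\arccos e\le\arccos\operatorname{Re}\langle\Psi,R_1R_2R_3\Psi\rangle\le\sum_j\arccos(2p_j-1)\le3\arccos(2p-1).
\]
The last step uses concavity of $\arccos$ on $[0,1]$. This applies because, when $p>(2+\sqrt3)/4$, each $p_j\ge3p-2>\tfrac12$; for smaller $p$ the bound holds trivially. For $c\ge\cos(\pi/6)$ this chain is exactly $\tfrac12\operatorname{Re}\Tr(DB_1B_2B_3)\ge\cos(3\arccos c)=T_3(c)$, the inequality your framing needed.
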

\begin{proof}
Put $D\coloneqq\tau_+-\tau_-$ and
\[
 T_+\coloneqq\frac{|D|-D}{2},\qquad T_-\coloneqq\frac{|D|+D}{2},\qquad
 \widehat\rho_s\coloneqq\frac{\rho_s+T_{\prod_j s_j}}{1+e}.
\]
The operators $T_+$ and $T_-$ are the negative and positive parts of the
Hermitian operator $D$. Thus they are positive, and $\Tr D=0$ gives
$\Tr T_+=\Tr T_-=e$. The corrected operators are therefore normalized
states. Their parity averages agree because
\[
 \tau_++T_+=\tau_-+T_-
 =\frac{\tau_++\tau_-+|D|}{2}.
\]
For $\epsilon=\prod_j s_j$, positivity and the trace-norm triangle
inequality give
\[
 \frac12\|\widehat\rho_s-\rho_s\|_1
 =\frac{\|T_\epsilon-e\rho_s\|_1}{2(1+e)}
 \leq\frac{\Tr T_\epsilon+e\Tr\rho_s}{2(1+e)}
 =\frac{e}{1+e}.
\]
For each $\epsilon$ and $j$, exactly two strings of parity $\epsilon$
have $s_j=+1$ and two have $s_j=-1$. Since
$F_{+|j}+F_{-|j}=\Id$, the added contribution to the unnormalized
average success is
\[
 \frac1{24}\sum_{\epsilon=\pm1}\sum_{j=1}^3
 \Tr\bigl[T_\epsilon(2F_{+|j}+2F_{-|j})\bigr]=\frac e2.
\]
This proves the score identity in \eqref{eq:parity-correction}.
Proposition~\ref{prop:RAC} applies to the corrected code, so that
\[
 p\leq(1+e)\omega_Q(2,3)-\frac e2
 =\frac12+\frac{1+e}{\sqrt6}.
\]
Rearranging proves the asserted lower bound on $e$.

For the second bound, realize the three decoding measurements, as in
Section~\ref{sec:rectangle}, by one isometric embedding of the encoding
space and projective measurements with reflections $B_j$. This preserves
$p$ and $e$, and we suppress the embedding. Put
$p_j\coloneqq\frac18\sum_s\Tr\bigl(\rho_s(\Id+s_jB_j)/2\bigr)$, so
$p=(p_1+p_2+p_3)/3$. Choose normalized purifications $\psi_s$ on a common
space and form the unit vector and reflections
\[
 \Psi\coloneqq\frac1{\sqrt8}\bigoplus_s\psi_s,
 \qquad R_j\coloneqq\bigoplus_s s_j(B_j\otimes\Id).
\]
Then $\langle\Psi,R_j\Psi\rangle=2p_j-1$. Grouping the strings by parity
and using that $B_1B_2B_3$ is unitary gives
\[
 \operatorname{Re}\langle\Psi,R_1R_2R_3\Psi\rangle
 =\frac12\operatorname{Re}\Tr(DB_1B_2B_3)\leq\frac12\|D\|_1=e.
\]
The second bound increases with $e$ and equals $(2+\sqrt3)/4$ at $e=0$,
so it holds if $p\leq(2+\sqrt3)/4$. Otherwise $p_j\geq3p-2>1/2$ for each
$j$. The angle $\arccos\operatorname{Re}\langle u,v\rangle$ between unit
vectors is the geodesic distance on the unit sphere of the underlying
real Hilbert space; it obeys the triangle inequality and is invariant
under unitaries. Along $\Psi,R_1\Psi,R_1R_2\Psi,R_1R_2R_3\Psi$ the
consecutive angles are $\arccos(2p_j-1)$. Since $\arccos$ decreases and
is concave on $[0,1]$,
\[
 \arccos e\leq\arccos\operatorname{Re}\langle\Psi,R_1R_2R_3\Psi\rangle
 \leq\sum_j\arccos(2p_j-1)\leq3\arccos(2p-1).
\]
Dividing by three and applying the decreasing cosine on $[0,\pi]$ gives
$2p-1\leq\cos(\frac13\arccos e)$. Together with the first bound this
proves \eqref{eq:approx-parity-bound}.
\end{proof}

\section{Conclusion}\label{sec:conclusion}
The $2\times3$ magic rectangle has quantum value $(1+\sqrt{2/3})/2$.
Equality forces a maximally entangled pair of four-dimensional systems
and fixes the measurements on the local state supports in every
finite-dimensional optimal strategy, up to local isometries. The sum-of-squares certificate
proves the same value for commuting-operator strategies and the
almost-quantum relaxation; its equality relations determine the
measurement algebra underlying the finite-dimensional classification.

For finite-dimensional magic-square strategies, the rectangle determines
the exact score threshold for certifying randomness in Alice's complete
first row against quantum side information. At the threshold, every
finite-dimensional strategy with perfect prediction contains the same
four-dimensional entangled factor. The original measurements preserve
orthogonal subspaces indexed by Alice's first-row output. Separate local isometries extract
the factor from the whole purification while preserving coherence between
these output branches.

Above the threshold, the common pairwise Hamming distance of legal row
outputs gives a linear bound on the loss of predictability. Together with explicit
mixtures, it shows that one minus the best compatible guessing probability,
the smallest compatible conditional min-entropy and the smallest compatible
conditional von Neumann entropy all have linear order in the score excess.

The same equality relations classify all optimal finite-dimensional three-bit random access
codes that hide the full input parity. A common ancillary state and a
single isometry suffice for all eight original encodings, and the decoder
measurements are fixed on the encodings' occupied support. For approximate parity
hiding, an explicit correction on the same encoding space keeps all
decoders fixed and gives a dimension-independent bound linear in the
trace distance between the parity averages.

Extending the exact classifications to approximately optimal strategies and
encodings is a natural next problem. For the rectangle,
Corollary~\ref{cor:residual} supplies residual estimates controlled by the
score deficit; a possible route to quantitative local extraction is to
establish stability versions of the support and algebra arguments.
Further questions concern the exact guessing probability as a function of
the magic-square score, including the sharp coefficient near the
threshold, and the optimal coding success as a function of the trace distance
between the parity averages, including its slope at zero. For larger two-row rectangles, the
almost-quantum closed form conjectured by Adamson and Wallden remains a
natural target for analytic certificates.

\appendix
\section{The moment calculation with outcome projectors}\label{app:npa}
Corollary~\ref{cor:almost} uses the sum-of-squares identity inside a
truncated moment matrix. We verify here that its entire expansion uses
only the elementary constraints available at that level. The critical
point is that the legal row outcomes already imply the needed parity
reductions at degree two in outcome projectors.

Write Alice's formal outcome projectors as $A_{a|x}$, with
$a\in\{1,2,3,4\}$ and $x\in\{0,1\}$, and Bob's as $F_{b|j}$, with
$b\in\{\pm1\}$ and $j\in\{1,2,3\}$. The elementary relations are
\begin{align}\label{eq:outcome-relations}
A_{a|x}A_{c|x}&=\delta_{ac}A_{a|x},&
\sum_aA_{a|x}&=\Id,\notag\\
F_{b|j}F_{d|j}&=\delta_{bd}F_{b|j},&
\sum_bF_{b|j}&=\Id,
\end{align}
with self-adjoint outcome symbols and interchange of Alice and Bob
letters. The $1+AB$ word set is
\[
\mathcal V\coloneqq\{\Id\}\cup\{A_{a|x}\}\cup\{F_{b|j}\}
\cup\{A_{a|x}F_{b|j}\}.
\]
A feasible moment matrix has entries
$\Gamma_{u,v}\coloneqq L(u^*v)$ for $u,v\in\mathcal V$, is positive
semidefinite, satisfies $L(\Id)=1$, and obeys conjugate symmetry and
the zero, word-identification, and completeness constraints induced by
\eqref{eq:outcome-relations} \cite{NPA}. After ordering the parties,
the domain comprises words with at most two Alice letters and two Bob
letters. For example,
$A_1A_2F_1F_2=(A_1F_1)^*(A_2F_2)$, where the subscripted letters
here denote arbitrary individual outcome symbols. There are $63$
indices in the all-outcome description; eliminating one outcome per
measurement by completeness gives an equivalent $28$-index description.

Use the fixed labels \eqref{eq:alpha} to define
\[
D_j\coloneqq\sum_a\alpha_{aj}A_{a|0},\qquad
C_j\coloneqq-\sum_a\alpha_{aj}A_{a|1},\qquad
B_j\coloneqq\sum_{b=\pm1}bF_{b|j}.
\]
For distinct $i,k,j$, outcome orthogonality and completeness give the quadratic reductions
\begin{equation}\label{eq:quadratic-reductions}
D_j^2=C_j^2=B_j^2=\Id,
\qquad D_iD_k=D_j,\qquad C_iC_k=-C_j.
\end{equation}
For instance,
$D_iD_k=\sum_a\alpha_{ai}\alpha_{ak}A_{a|0}=D_j$
because each $\alpha_a$ has even parity. These reductions remain valid
inside moments with up to two letters of the other party, which covers
every context used in the certificate. Representative
moment constraints are
\begin{align*}
L(A_{a|x}A_{c|x}F_{b|i}F_{d|k})
&=\delta_{ac}L(A_{a|x}F_{b|i}F_{d|k}),\\
L(A_{a|x}A_{c|z}F_{b|j}F_{d|j})
&=\delta_{bd}L(A_{a|x}A_{c|z}F_{b|j}).
\end{align*}
In the first equality the left and right moments are respectively
$\Gamma_{A_{a|x}F_{b|i},A_{c|x}F_{d|k}}$ and
$\Gamma_{F_{b|i},A_{a|x}F_{d|k}}$, with the latter multiplied by
$\delta_{ac}$. In the second equality the right moment is
$\Gamma_{A_{a|x},A_{c|z}F_{b|j}}$. Completeness in these contexts is
likewise a sum of entries of the same matrix.

Each of $Z_j-q\Id$, $Y_j-V_j/r$, and $X_j+qB_j-qW_j$ in
\eqref{eq:SOS} belongs to $\operatorname{span}\mathcal V$.
The identities \eqref{eq:rowcancel} use at most two Alice letters;
\eqref{eq:cancela}--\eqref{eq:cancelc} use those identities with
zero, one, or two Bob letters. The diagonal square terms use
$B_j^2=\Id$ with at most two Alice letters as context. Thus every
step in the certificate expansion remains within the moment domain,
and applying $L$ gives
\begin{align*}
2r-L(S)={}&\frac{5r}{64}\sum_j L\bigl((Z_j-q\Id)^*(Z_j-q\Id)\bigr)\\
&+\frac r{12}\sum_j L\bigl((Y_j-V_j/r)^*(Y_j-V_j/r)\bigr)\\
&+\frac r{192}\sum_j
L\bigl((X_j+qB_j-qW_j)^*(X_j+qB_j-qW_j)\bigr)\geq0.
\end{align*}
The inequality follows from $\Gamma\succeq0$ term by term. It proves
the relaxed upper bound using only degree-two relations in each party.
Together with the reference quantum Gram matrix, this completes the
moment-level proof of Corollary~\ref{cor:almost}. The identification of
this bipartite relaxation with the almost-quantum set is given in
\cite[Lemma 3 and Remark 1]{almostquantum}.
\section{The tracial bound and its operator limitation}\label{app:trace}
An independent short proof of the rectangle bound is available when the
shared state is maximally entangled, but it cannot replace Theorem
\ref{thm:SOS}. It isolates the role of the trace and explains why a state-independent certificate is needed.
Write $\ket{\Phi_d}\coloneqq d^{-1/2}\sum_{a=1}^d\ket{a,a}$.

\begin{proposition}[Tracial bound]\label{prop:trace}
For $\Phi_d$ and arbitrary POVMs the rectangle score is at most
$(1+\sqrt{2/3})/2$. Equality is possible exactly when $4$ divides $d$,
and, up to a local unitary $\Omega\otimes\overline{\Omega}$, which fixes
$\Phi_d$, the attaining strategies are the reference strategy of Section~\ref{sec:rectangle} tensored with
$\Phi_{d/4}$.
\end{proposition}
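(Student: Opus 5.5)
The plan is to use the transpose trick directly on the original POVMs, with no projective dilation, since a dilation would move the state away from $\Phi_d$. On $\Phi_d$ we have $\langle A\otimes B\rangle=\Tr(AB^T)/d$. Write Alice's row effects as $P_a\coloneqq A_{a|0}$ and $Q_b\coloneqq A_{b|1}$, and form the coordinate contractions
\[
D_j\coloneqq\sum_a\alpha_{aj}P_a,\qquad C_j\coloneqq-\sum_b\alpha_{bj}Q_b,\qquad X_j\coloneqq D_j+C_j .
\]
Bob's binary measurement in column $j$ gives a contraction $B_j$, and $\langle S\rangle=\frac1d\sum_j\Tr(X_jB_j^T)\le\frac1d\sum_j\Tr|X_j|$. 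This bound is attained by $B_j^T=\operatorname{sgn}X_j$.

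The upper bound then follows from Cauchy--Schwarz:
\[
\sum_j\Tr|X_j|\le\sqrt{3d\sum_j\Tr X_j^2}.
\]
Expand $\Tr X_j^2$ into three pieces.
\begin{itemize}
\item Since $\alpha_{aj}^2=1$, operator convexity (Kadison) gives $D_j^2\le\sum_aP_a=\Id$, so $\sum_j\Tr D_j^2\le3d$.
\item Likewise $\sum_j\Tr C_j^2\le3d$.
\item The cross term uses $\alpha_a\cdot\alpha_b=4\delta_{ab}-1$:
\[
\sum_j\Tr(D_jC_j)=\sum_{a,b}(1-4\delta_{ab})\Tr(P_aQ_b)=d-4\sum_a\Tr(P_aQ_a)\le d .
\]
\end{itemize}
Hence $\sum_j\Tr X_j^2\le8d$, so $\langle S\rangle\le\sqrt{24}=2r$, which is the score $(1+\sqrt{2/3})/2$.

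For the equality case, every inequality above must be tight.
\begin{enumerate}
\item $\Tr(P_aQ_a)=0$, so $P_aQ_a=0$.
\item $\sum_j\Tr D_j^2=3d$ forces $D_j^2=\Id$. I expect the main obstacle here: deducing from equality in Kadison's inequality for a three-coordinate family that the $P_a$ are projections. The plan is to observe that $\sum_jD_j^2=\sum_{a,c}(\alpha_a\cdot\alpha_c)P_aP_c$. With $\sum_aP_a=\Id$, this gives $\sum_jD_j^2=4\sum_aP_a^2-\Id$. Equality with $3\Id$ gives $\sum_aP_a^2=\sum_aP_a$, so each $P_a-P_a^2\ge0$ has zero sum and vanishes. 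Then the $D_j$ are commuting reflections with $D_1D_2D_3=\Id$, and similarly for the $C_j$ with product $-\Id$.
\item Cauchy--Schwarz equality forces every $|X_j|$ to be the same scalar, so $X_j^2=\frac83\Id$. This is $\ac{D_j}{C_j}=\frac23\Id$ on all of $\mathbb C^d$.
\item Optimality of Bob forces $B_j^T=X_j/q$ wherever $X_j\ne0$, which is everywhere.
\end{enumerate}

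Lemma~\ref{lem:conference} now applies on the full space. It gives $d=4m$ and a unitary $\Omega$ with $\Omega D_j\Omega^*=D_j^0\otimes\Id_m$ and $\Omega C_j\Omega^*=C_j^0\otimes\Id_m$. Then $B_j=\overline\Omega^{\,*}\bigl((X_j^0)^T/q\otimes\Id_m\bigr)\overline\Omega$, which is the reference Bob observable conjugated by $\overline\Omega$. Since $(\Omega\otimes\overline\Omega)\Phi_d=\Phi_d$ and $\Phi_{4m}=\Phi_4\otimes\Phi_m$ after reordering the factors, the strategy is the reference strategy tensored with $\Phi_{d/4}$. Conversely, such strategies attain the value by Section~\ref{sec:rectangle}. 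If $4\nmid d$, the lemma's dimension count rules out equality.
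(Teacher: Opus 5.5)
Your proposal is correct and follows essentially the paper's argument. Both use the transpose identity on $\Phi_d$, Hilbert--Schmidt Cauchy--Schwarz with the cross-term identity $\sum_j\Tr(D_jC_j)=d-4\sum_a\Tr(P_aQ_a)$, and equality conditions that feed into Lemma~\ref{lem:conference}; the one small difference is that you obtain projectivity of the $P_a$ from $\sum_jD_j^2=4\sum_aP_a^2-\Id$, whereas the paper observes that each $P_a$ is dominated by the projections $(\Id+\alpha_{aj}D_j)/2$ and that distinct labels differ in some coordinate, and both routes work.
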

\begin{proof}
Let $P_a,Q_a$ be the two physical row POVMs, both labeled by $\alpha_a$.
After negating the second row, the coordinate observables
$D_j\coloneqq\sum_a\alpha_{aj}P_a$ and $C_j\coloneqq-\sum_a\alpha_{aj}Q_a$ are contractions.
The identity $\alpha_a\cdot\alpha_b=4\delta_{ab}-1$ gives
\[
\sum_j\Tr(D_jC_j)=d-4\sum_a\Tr(P_aQ_a)\leq d.
\]
Thus $\sum_j\Tr(X_j^2)\leq8d$, while
$\sum_j\Tr(B_j^2)\leq3d$. Hilbert--Schmidt Cauchy--Schwarz bounds the
Bell sum by $\sqrt{8d\,3d}/d=2\sqrt6$.
Equality forces all $D_j$, $C_j$ and $B_j$ to be reflections and
$B_j^T=X_j/q$. Each coordinate marginal is then projective, so
$P_a$ is dominated by the projection $(\Id+\alpha_{aj}D_j)/2$ for every $j$; distinct $a,b$ differ in
some coordinate, hence $P_aP_b=0$, and the $P_a$ are the joint spectral
projections of the $D_j$; likewise for the $Q_a$. Since $B_j^2=\Id$ gives
$X_j^2=q^2\Id$, we get $\ac{D_j}{C_j}=2\Id/3$, and
Lemma \ref{lem:conference} supplies the classification and rank condition.
\end{proof}

The operator inequality
$\sum_j\ac{D_j}{C_j}\leq2\Id$ is false. Its trace is the bound on $\sum_j\Tr(D_jC_j)$ above, and, through $\sum_jX_j^2\leq8\Id$, it would give the same Cauchy--Schwarz argument in every state. Keep the diagonal $D_j^0$ of Section~\ref{sec:rectangle}, let $J_4$ be the all-ones matrix and $H\coloneqq\Id-J_4/2$, and set
$D_j\coloneqq D_j^0$ and $C_j\coloneqq-HD_jH$. Both parity constraints hold. The identity $\alpha_a\cdot\alpha_b=4\delta_{ab}-1$ gives $\sum_jD_jHD_j=\Id+J_4/2$, so
\[
\sum_j\ac{D_j}{C_j}=2\Id-4H=-2\Id+2J_4
\]
has eigenvalues $6,-2,-2,-2$. Each $\Tr(P_aQ_a)$ is nonnegative for the
spectral projections $P_a,Q_a$ of these $D_j$ and $C_j$. The operator
$\sum_a\ac{P_a}{Q_a}=\frac14\bigl(2\Id-\sum_j\ac{D_j}{C_j}\bigr)$
equals $H$ and has eigenvalue $-1$.

\section{A general projection bound}\label{app:projection}
A projection onto Eve's predicted key sector gives a quantitative bound
whenever the corresponding threshold for perfect prediction is known.
For the magic-square game this yields the following weaker consequence,
independently of the replacement argument of Theorem~\ref{thm:linear}.
\begin{proposition}[Projection estimate]\label{prop:gentle}
If a strategy as in Theorem~\ref{thm:linear} has score $w\geq\beta_0$, then
\[
\pg(K\mid E,x=0)\leq1-(w-\beta_0)^2.
\]
Consequently $H_{\min}(K\mid E,x=0)\geq
-\log_2[1-(w-\beta_0)^2]$.
\end{proposition}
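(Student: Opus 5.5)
Let $\pg\coloneqq\pg(K\mid E,x=0)$ and $\epsilon\coloneqq1-\pg$. We want $w\le\beta_0+\sqrt{\epsilon}$, which is equivalent to $\pg\leq1-(w-\beta_0)^2$ when $w\geq\beta_0$. Fix a guessing POVM $\{M_k\}$ for Eve with success close to $\pg$, and dilate it to a projective measurement $\{\Pi_k\}$ on an enlarged Eve system. This leaves the Alice--Bob behaviour and the guessing probability unchanged.

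We compare the real state with a corrected state in which Eve's guess is always right. Define the unnormalized vector
\[
\phi\coloneqq\sum_k(A_{k|0}^{1/2}\otimes\Id_B\otimes\Pi_k)\psi .
\]
Its squared norm and its overlap with $\psi$ satisfy
\[
\|\phi\|^2=\sum_k\langle A_{k|0}\otimes\Pi_k\rangle\geq\pg,
\qquad
\langle\psi,\phi\rangle=\sum_k\langle A_{k|0}^{1/2}\otimes\Pi_k\rangle\geq\sum_k\langle A_{k|0}\otimes\Pi_k\rangle=\pg,
\]
where the inequality uses $A^{1/2}\geq A$ for $0\leq A\leq\Id$. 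The Fuchs--van de Graaf inequality then bounds the trace distance between $\psi$ and the normalized $\phi$ by about $\sqrt{1-\pg}$. This estimate is where the constant matters.

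The cleaner route is to avoid square roots entirely and project instead. Set
\[
\rho'\coloneqq\frac1{\pg}\sum_k(\sqrt{A_{k|0}}\otimes\Pi_k)\rho(\sqrt{A_{k|0}}\otimes\Pi_k).
\]
The difficulty is that this changes Alice's state, so the remaining measurements no longer describe the same strategy. To repair this, I would first dilate Alice's key POVM to a projective measurement $\{\widetilde A_{k|0}\}$. Then the operator
\[
\Lambda\coloneqq\sum_k\widetilde A_{k|0}\otimes\Pi_k
\]
is a projection that commutes with every game operator except possibly the other rows of Alice. Those rows can be kept projective on a separate register, so they commute with $\Lambda$ after a Naimark dilation of the row-$0$ measurement in which each input uses its own ancilla. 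The state $\Lambda\rho\Lambda/\pg$ is then a strategy with perfect prediction. By the definition of $\beta_0$ (with Proposition~\ref{prop:reduction} giving its value), its score is at most $\beta_0$.

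The gentle measurement lemma applied to $\Lambda$ with $\Tr(\Lambda\rho)=\pg$ gives
\[
\tfrac12\bigl\|\rho-\Lambda\rho\Lambda/\pg\bigr\|_1\leq\sqrt{1-\pg}.
\]
The winning effect $\mathcal W$ satisfies $0\leq\mathcal W\leq\Id$, so
\[
w\leq\beta_0+\sqrt{1-\pg}.
\]
Rearranging gives $\pg\leq1-(w-\beta_0)^2$, and applying $-\log_2$ gives the min-entropy bound.

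The main obstacle is to make sure the projected state still defines a legitimate finite-dimensional strategy in the class defining $\beta_0$. This requires $\Lambda$ to commute with all of Bob's effects and with Alice's row-$1$ and row-$2$ effects. I would arrange this with the input-dependent Naimark dilation described above: Alice's measurement for row $x$ acts on her system together with a fresh ancilla $R_x$. Under this dilation the row-$0$ projectors leave the ancillas $R_1,R_2$ untouched, so the row-$1$ and row-$2$ measurements read the same marginal distribution. Alternatively, we can simply note that the score functional only uses expectations of products of one Alice effect and one Bob effect, which the projected branch reproduces.
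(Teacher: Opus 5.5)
Your final argument is the paper's proof: you dilate the key measurement, project onto the block-diagonal correct-guess subspace $\Lambda=\sum_k\widetilde A_{k|0}\otimes\Pi_k$, bound the projected strategy's score by $\beta_0$, and combine the gentle-measurement estimate $\sqrt{1-\pg}$ with $0\leq\mathcal W\leq\Id$. The commutation you call the main obstacle is not needed, and your separate-ancilla fix for it is false, since the row-$0$ and row-$1,2$ projectors all act on Alice's original system and the projection generally changes her row-$1,2$ statistics; instead, $\Lambda\rho\Lambda/\pg$ with the unchanged local measurements is simply a new strategy in the class defining $\beta_0$, whose key is perfectly guessed because each $\widetilde A_{k|0}$ commutes with $\Lambda$, and this is what your closing alternative remark and the paper use.
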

\begin{proof}
Use Eve's optimal guessing measurement and keep its outcome as a classical
register; denote the resulting joint state by $\rho$.
Dilate all of Alice's measurements with one local embedding, as in
Section~\ref{sec:rectangle}; her key measurement then has projections $P_k$.
Let $P$ be the block-diagonal projection onto Alice's predicted key outcome
in each Eve sector. If $p=\pg(K\mid E,x=0)$, then $\Tr(P\rho)=p$.
Since $K$ has four possible values, $p\geq1/4$, and the normalized state $\rho'\coloneqq P\rho P/p$ has a perfectly guessed
key. Together with the same measurements and with Eve's register as side information, it forms one of the strategies that define $\beta_0$, so its score is at most $\beta_0$.
A purification projected and normalized in the same way has overlap
$\sqrt p$ with the original purification. Hence
$\tfrac12\|\rho-\rho'\|_1\leq\sqrt{1-p}$.
The game winning effect lies between zero and identity, and therefore
$w\leq\beta_0+\sqrt{1-p}$. Rearranging proves the claim.
\end{proof}

\section{A second sum-of-squares identity}\label{app:second}
The rectangle gap also has a decomposition using the total Bell residual and pairwise differences of its three summands. With the notation of Section~\ref{sec:sos},
\begin{align*}
2r\Id-S={}&\frac r{32}(S-2r\Id)^2
+\frac r{48}\sum_{i<j}(Z_i-Z_j)^2
+\frac r{96}\Bigl(\sum_jY_jB_j\Bigr)^2\\
&+\frac r{16}\sum_j(Y_j-V_j/r)^2
+\frac r{96}\sum_j(W_j-2B_j)^2.
\end{align*}
To verify the identity, put $T\coloneqq\sum_jY_jB_j$. In addition to
\eqref{eq:rowcancel}--\eqref{eq:cancelc}, direct expansion gives
\[
 S^2+T^2=12\Id+2J,\qquad
 \sum_{i<j}(Z_i-Z_j)^2=3\sum_jX_j^2-S^2.
\]
Indeed, the first identity uses $X_j^2+Y_j^2=4\Id$ and, for each ordered
pair, $X_iX_k+Y_iY_k=2Y_j$; the second is the sum of three squared differences.
Multiplying the right side of the decomposition by $96/r$ gives
\[
3(S-2r\Id)^2+
2\sum_{i<j}(Z_i-Z_j)^2+T^2+
6\sum_j(Y_j-V_j/r)^2+\sum_j(W_j-2B_j)^2.
\]
The first three terms reduce to
$84\Id+6\sum_jX_j^2+2J-12rS$. The last two reduce to
$108\Id-6\sum_jX_j^2-2J-4rS$ by
\eqref{eq:rowcancel}--\eqref{eq:cancelc}, using $r^2=6$.
Their sum is $192\Id-16rS=(96/r)(2r\Id-S)$, as required.
Both certificates are identities in the algebra generated by the two
even-parity triples $D_j,-C_j$ and the three Bob reflections; within
either party the order of different measurement settings is retained.
\pdfbookmark[1]{References}{references}

\end{document}